\documentclass[11pt,a4paper]{article}
\usepackage[T1]{fontenc}
\usepackage{lmodern,microtype}

\usepackage[margin=27mm,headheight=14pt]{geometry}
\usepackage{amsmath,amssymb,amsthm,mathtools,bm,mathrsfs}
\usepackage{enumitem,booktabs,fancyhdr}
\usepackage[dvipsnames]{xcolor}
\usepackage[colorlinks=true,linkcolor=MidnightBlue,citecolor=MidnightBlue,urlcolor=MidnightBlue]{hyperref}
\hypersetup{pdftitle={BCOV quantization: flat space, relative CE theory, and Calabi--Yau manifolds},pdfauthor={Si Li and Minghao Wang},pdfsubject={Perturbative BCOV quantization on fixed Calabi--Yau manifolds and trivalent Feynman graphs for primary fields}}
\setlist{itemsep=3pt,topsep=5pt,leftmargin=1.7em}
\numberwithin{equation}{section}
\newtheorem{theorem}{Theorem}[section]
\newtheorem{proposition}[theorem]{Proposition}
\newtheorem{lemma}[theorem]{Lemma}

\theoremstyle{definition}
\theoremstyle{remark}\newtheorem*{remark}{Remark}
\newcommand{\C}{\mathbb C}\newcommand{\R}{\mathbb R}
\newcommand{\Z}{\mathbb Z}
\newcommand{\h}{\hbar}\newcommand{\dd}{\mathrm d}
\newcommand{\dc}{\mathrm d^c}\newcommand{\bdel}{\bar\partial}
\newcommand{\Tr}{\operatorname{Tr}}\newcommand{\PV}{\operatorname{PV}}
\newcommand{\Symb}{\operatorname{Symb}}\newcommand{\Aut}{\operatorname{Aut}}

\newcommand{\prim}{\mathrm{prim}}\newcommand{\Ev}{\operatorname{Ev}}
\newcommand{\Nor}{\operatorname{Nor}}\newcommand{\res}{\mathrm{res}}
\newcommand{\cM}{\mathcal M}\newcommand{\Mbar}{\overline{\mathcal M}}
\newcommand{\Mhat}{\widehat{\mathcal M}}\newcommand{\cE}{\mathcal E}
\newcommand{\cJ}{\mathcal J}

\newcommand{\cR}{\mathcal R}\newcommand{\cO}{\mathcal O}

\DeclareMathOperator*{\argmin}{arg\,min}

\begin{document}

 \title{\LARGE\bfseries On the exact quantization of BCOV Theory on Calabi-Yau Manifolds}

\author{
  Si Li \thanks{
    Yau Mathematical Sciences Center, Tsinghua University, Beijing, China.
  }
  \and
  Minghao Wang \thanks{
    Department of Mathematics, East China Normal University,
    Shanghai, China.\newline
    \hspace*{1.8em}Institute of Mathematical Sciences, the Chinese University of Hong Kong, Hong Kong, China.
    }
}

  \date{}

  \maketitle

\begin{abstract}

We announce the solution of constructing perturbative quantum BCOV invariants on arbitrary Calabi-Yau manifolds. In this note, we outline the construction and present an explicit solution. The quantization includes all descendant fields and all loop orders, and expresses the quantum BCOV interaction by an explicit integration over the moduli space of marked Riemann surfaces. This canonical solution provides a mathematical construction of all higher genus BCOV invariants on compact Calabi-Yau manifolds, which are conjecturally equivalent to all higher genus Gromov-Witten invariants on the mirror.   

\end{abstract}

\begingroup
\small
\setlength{\parskip}{0pt}
\setcounter{tocdepth}{2}
\pdfbookmark[1]{Contents}{contents}
\tableofcontents
\endgroup
\clearpage

\section*{Introduction}
\phantomsection
\addcontentsline{toc}{section}{Introduction}
\markright{Introduction}

The Kodaira-Spencer theory of gravity was introduced by Bershadsky--Cecotti--Ooguri--Vafa 
\cite{bcov} as the closed string field theory corresponding to the B-twisted
topological string. The original formulation of BCOV theory was on Calabi-Yau manifolds of dimension $3$. In \cite{costello-li}, Costello and the first author extended Kodaira-Spencer gravity to Calabi-Yau manifolds of arbitrary dimension by incorporating with descendant fields, and formulated the general quantization framework. This is a quantization in Costello's effecitve BV formalism \cite{costello}, and a solution of effective quantum master equation (which we call BCOV quantization) leads to a rigorous construction of higher genus B-model invariants on a Calabi-Yau manifold (see also \cite{li-renorm}). In \cite{costello-li}, a canonical BCOV quantization was shown to exist on elliptic curves by deformation theory. This was later explicitly constructed in \cite{li-VOA}, in which the full quantum BCOV invariants on elliptic curves were proven to be equivalent to all genus Gromov-Witten invariants on the mirror.  In \cite{costello-li-open-closed}, a BCOV quantization was shown to exist locally on the flat space $\mathbb C^N$ by coupling with holomorphic Chern-Simons theory and anomaly cancellation argument. At one loop, this anomaly cancellation is the topological string analogue of the Green–Schwarz mechanism \cite{costello-li-GS}.

In this note, we present an explicit quantization of BCOV theory on any compact
Calabi--Yau manifold in the Costello-Li framework \cite{costello-li}. The construction includes all descendants and is valid at every loop order. More precisely, let $X$ be a closed connected Calabi--Yau manifold of complex dimension $N$, with holomorphic volume form $\Omega$ and a
K\"ahler metric $g$ satisfying \eqref{eq:volume-metric}.
The field complex is $\PV(X)[[\zeta]][2]$, with differential
$Q=\bdel-\zeta\partial_\Omega$. Write a field 
$$
A(\zeta)=\sum_{k\geq0}\zeta^kA_k, \qquad A_k\in \PV(X)
$$
$A_0$ is the primary field and $A_k$ has descendant index $k$. We construct a local interaction \eqref{eq:curved-interaction}
$$
 \cJ_X= \sum_{g\geq 0} \hbar^g I_g. 
$$
$I_0$ is the genus-zero
classical descendant BCOV interaction of Costello--Li
\cite{costello-li}. The higher loop $I_g$'s are local functionals consisting of holomorphic derivatives with coupling constants given by  explicit integrations over the moduli space of marked Riemann surfaces (see below). 

At every scale $L>0$, let $I_X[L]$ be the effective functional that is built up from Feynman graph integrals with vertex $\cJ_X$ and propagator $P_0^L=\int_0^L \partial \bar\partial^* e^{-t[\bar\partial,\bar\partial^*]}dt$. Since BCOV theory is a holomorphic field theory, the above Feynman graph integrals are free of ultraviolet divergence and they are well-defined by Cauchy principal values \cite{wangyan,renorm}, without the need for singular ultraviolet counterterms. Theorem \ref{thm:curved} proves that the resulting interaction satisfies the renormalized BV quantum master equation
\[
 QI_X[L]+\tfrac12\{I_X[L],I_X[L]\}_{g,L}
                    +\h\Delta_{g,L}I_X[L]=0.
\]
The BV bracket $\{-,-\}_{g,L}$ and BV Laplacian $\Delta_{g,L}$ are defined by the scale-$L$ heat kernel.
We abbreviate this equation to QME. Theorem \ref{thm:primary} then shows that, for
$1\leq N\leq3$, amplitudes with primary external fields are computed by
the classical cubic interaction with the same propagator. Applying the general framework \cite{costello-li}, we obtain the higher genus BCOV invariants on $X$ from the partition function, i.e. from the infrared limit $I_X[\infty]$.

The paper is organized as follows.

Section \ref{sec:flat} constructs the quantized local interaction on $\C^N$ from Green
functions on marked curves. Stokes' theorem on their moduli space gives
the relation among vertices needed for the QME. Comparing its boundary
terms with the short-time limits of heat-kernel integrals, and applying
the convergence results of Wang \cite{wang}, proves the flat-space
quantization.

The exact formula of the interaction can be described as follows. In local Betti coordinates for conserved sources $P=(\mathbf{p}_i)$, the theta
form on the Jacobian is (see Lemma \ref{lem:betti})
\[
 \Theta_g(P)=\sum_{a=1}^N\sum_{\mu=1}^g
       \dd x_\mu^a\wedge\dd y_\mu^a.
\]
It determines the coefficients
\[
 V_{g;\bm k}(P)=\int_{\cM_{g,n}}
       \frac{\Theta_g(P)^{Ng}}{(Ng)!}\prod_{i=1}^n\psi_i^{k_i},
 \qquad |\bm k|=(3-N)g+n-3.
\]
Replacing each source $\mathbf{p}_i$ by a holomorphic derivative on the $i$th input
gives the local BCOV interaction
\[
 I(A)=\sum_{\substack{g,n,\bm k\geq0\\2g-2+n>0}}
       \frac{\h^g}{n!}\Tr\left.
       V_{g;\bm k}(\partial_{z_1},\ldots,\partial_{z_n})
       \prod_{i=1}^n A_{k_i}(z_i)\right|_{z_1=\cdots=z_n=z}
       =I_0+I_+.
\]
Here $A_{k_i}$ are the descendant components, $I_0$ is the genus-zero
classical descendant BCOV interaction of Costello--Li
\cite{costello-li}, and $I_+$ consists of the positive-genus terms. $\Tr$ is the trace map on polyvector fields, given by integration again the trivial holomorphic volume form. Thus the theta-form formula extends the classical interaction to
all genera. The construction and normalization are detailed in Section \ref{sec:flat}.

Section \ref{sec:curved} extends the exact formula of the interaction to a compact Calabi--Yau manifold $X$ of arbitrary complex dimension. Section \ref{sec:primary} proves the assertion about cubic vertices.

A detailed account of the construction and proofs of the results announced here will appear in a subsequent paper \cite{li-wang}. The above formula suggests an interpretation in terms of the B-twisted topological $\sigma$-model coupled to worldsheet gravity. We leave this for future investigation. 

All series are formal in genus, arity, and descendant degree.
Convergence refers to each coefficient integral. The total genus of a
connected graph is $G=b_1(\Gamma)+\sum_vg_v$, where $g_v$ is the genus of
the vertex at $v$. All products and commutators use the total grading.

\bigskip
\noindent\textbf{Acknowledgements.}
The authors would like to thank Zhengping Gui, Xun Lin, Nicolai Reshetikhin‌, Junrong Yan, and Shing-Tung Yau for helpful
discussions and valuable suggestions. S.~L. is supported by NSFC No.12571068. The authors are grateful to ChatGPT 6 Astra for valuable assistance with the exploration of some of the proof strategies and calculations. The authors take full responsibility for the mathematical content and exposition of this paper.

\section{Quantization on flat space}\label{sec:flat}

In this section, we construct the quantum interaction of BCOV theory
on $\C^N$ using Green's functions on marked Riemann surfaces.
The interaction coefficients are expressed as integrals over the
moduli spaces of these surfaces. We then prove that the resulting
effective interaction satisfies the quantum master equation.
The proof uses two applications of Stokes' theorem.
The first, on the moduli spaces of marked Riemann surfaces,
gives relations among the interaction vertices through boundary
degenerations. We identify these boundary contributions with the
short-time boundary terms of heat-kernel integrals.
The second application, on Schwinger parameter space,
then gives the quantum master equation.

\subsection{Fields and propagators}
We recall Costello--Li's descendant formulation \cite{costello-li} of BCOV theory, with
their descendant parameter $t$ replaced by $-\zeta$ for later convenience. Set
\[
 \PV^{i,j}(\C^N)=\Omega^{0,j}(\C^N,\wedge^iT^{1,0}),\qquad
 \cE=\PV(\C^N)[[\zeta]][2],\qquad Q_0=\bdel-\zeta D,
\]
where $D=\partial_\Omega$ is the divergence operator for $\Omega=\dd z^1\cdots\dd z^N$. We assign cohomology degree
$|\zeta|=2$, $|\h|=6-2N$. Write
$$
A(\zeta)=\sum_{k\geq0}\zeta^kA_k, \qquad A_k\in \PV(\C^N).
$$
$A_0$ is the primary field and
$A_k$ has descendant index $k$. An element of
$\zeta^k\PV^{i,j}[2]$ has cohomology degree $i+j+2k-2$.

Let $\theta_a$ be the polyvector generators and
$\upsilon_a=\dd\bar z^a$. The divergence, gauge-fixing operator, and
Laplacian are
\begin{equation}\label{eq:flat-operators}
 D=\sum_a\partial_{z^a}\partial_{\theta_a},\qquad
 q_0=-\sum_a\partial_{z^a}\partial_{\upsilon_a},\qquad
 H_0=[\bdel,q_0]=-\sum_a\partial_{z^a}\partial_{\bar z^a}.
\end{equation}
Odd derivatives are left derivatives. 

For compactly supported inputs $\mu\in \PV_c(\C^N)$, we fix the trace map
\begin{equation}\label{eq:trace}
 \Tr(\mu)=c_N\int_{\C^N}
[\theta_1\cdots\theta_N\upsilon_1\cdots\upsilon_N]\mu\,\dd^{2N}z,  \qquad c_N=\frac{(-1)^{N(N-1)/2}}{\pi^N}.
\end{equation}
Here the bracket $[-]$ extracts the coefficient of $\theta_1\cdots\theta_N\upsilon_1\cdots\upsilon_N$, or equivalently, performs Berezin integration with the corresponding ordering convention.  

Let $E_t$ be the kernel of the heat operator $e^{-tH_0}$ on polyvectors, including the inverse trace
tensor. Its scalar factor is $(\pi t)^{-N}e^{-|z-w|^2/t}$. Define
\begin{equation}\label{eq:flat-kernels}
 K_t=DE_t,\qquad R_t=Dq_0E_t,\qquad
 P_{\epsilon,L}=\int_\epsilon^L R_t\,\dd t,\qquad
 \Delta_t=\partial_{K_t}.
\end{equation}

Following the set-up in \cite{costello}, 
for a kernel $J$, let $\partial_J$ contract one unordered pair of primary
slots. Thus the effective propagator $P_{\epsilon,L}$ supplies internal edges, and $K_t$ defines
the regularized BV Laplacian and its derived BV bracket $\{-,-\}_t$. Vertices may still have arbitrary descendant inputs. The heat-kernel time $t$ in the propagator is called a
\emph{Schwinger parameter}. In a Feynman graph, each internal
edge $e$ carries its own parameter $t_e$. The product of their
ranges is the \emph{Schwinger parameter space}; with cutoffs
$\epsilon$ and $L$, each $t_e$ ranges over $[\epsilon,L]$.

\begin{lemma}[Heat-kernel homotopy \cite{costello}]\label{lem:flat-homotopy}
With heat-time forms placed before field variables,
\begin{equation}\label{eq:kernel-homotopy}
 [Q_0,\partial_{P_{\epsilon,L}}]=\Delta_\epsilon-\Delta_L,
 \qquad
 ([Q_0,-]+\dd_t)(\Delta_t-\dd t\,\partial_{R_t})=0.
\end{equation}
\end{lemma}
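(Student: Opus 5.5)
The plan is to reduce both identities to operator identities between the kernels $E_t$, $K_t$, $R_t$, using $H_0=[\bdel,q_0]$ together with the fact that the heat kernel solves $\partial_tE_t=-H_0E_t$. Throughout, $Q_0=\bdel-\zeta D$ acts on the kernels through the tensor-product differential on two-slot kernels. The convention that heat-time forms sit before field variables fixes the signs for $\dd t$ and $\dd_t$.

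\textbf{Kernel-level identities.} First I would record the basic relations.
\begin{itemize}
\item $D$ commutes with $\bdel$ and with $H_0$, and $D^2=0$; these are immediate from \eqref{eq:flat-operators}, since all the coefficients are constant.
\item On the kernel $E_t$ (the kernel of $e^{-tH_0}$ with the inverse trace inserted), the total differential acts by $[\bdel,E_t]=0$, because $E_t$ represents a chain map; the $\zeta D$ part is likewise symmetric, since the trace \eqref{eq:trace} makes $D$ self-adjoint up to sign.
\item Consequently $Q_0K_t=Q_0(DE_t)=0$ on the kernel level. The extra $\zeta D$ terms can be absorbed using $D^2=0$ and the fact that $\partial_{K}$ only contracts primary slots, so only the $\zeta^0$ part of $Q_0$ matters for the contraction. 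I would check this last point carefully: $D$ acting on a descendant slot is compensated by the $D$-symmetry of $E_t$.
\item $Q_0R_t=\bdel(Dq_0E_t)=-D[\bdel,q_0]E_t=-DH_0E_t=\partial_t(DE_t)=\partial_tK_t$, up to the sign dictated by the odd operator $q_0$.
\end{itemize}

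\textbf{First identity.} Integrating the last relation gives $Q_0P_{\epsilon,L}=K_L-K_\epsilon$ as kernels, again up to sign. The commutator $[Q_0,\partial_J]$ equals $\partial_{Q_0J}$ for a kernel $J$, because $\partial_J$ is a second-order contraction and $Q_0$ is a derivation of the pairing. Hence $[Q_0,\partial_{P_{\epsilon,L}}]=\Delta_\epsilon-\Delta_L$. The orientation of $P$ and the sign convention in $R_t=Dq_0E_t$ together fix the overall sign to match \eqref{eq:kernel-homotopy}.

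\textbf{Second identity.} Here $\Delta_t-\dd t\,\partial_{R_t}$ is a form on the $t$-line. Applying $[Q_0,-]+\dd_t$ gives four terms:
\begin{itemize}
\item $[Q_0,\Delta_t]=\partial_{Q_0K_t}=0$;
\item $\dd_t\Delta_t=\dd t\,\partial_{\partial_tK_t}$;
\item $-[Q_0,\dd t\,\partial_{R_t}]=+\dd t\,\partial_{Q_0R_t}$, where the sign comes from moving the odd $\dd t$ past $Q_0$;
\item $\dd_t(\dd t\,\partial_{R_t})=0$.
\end{itemize}
The surviving terms cancel exactly when $Q_0R_t=-\partial_tK_t$ with the form-ordering sign. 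Equivalently, integrating the second identity over $[\epsilon,L]$ recovers the first, which serves as a consistency check.

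\textbf{Main obstacle.} I expect the real difficulty to be the sign and degree bookkeeping rather than any analysis. This includes the parity of $q_0$ and $D$, the left-derivative convention, the placement of $\dd t$, and confirming that the $\zeta D$ part of $Q_0$ contributes nothing to the primary-slot contractions. I would first verify everything for $N=1$ explicitly with $\theta,\upsilon$, and then argue that the general case factorizes over coordinates.
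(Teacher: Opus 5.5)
\textbf{Gap: the sign.} Your reduction is the standard argument behind the paper's citation of Costello. Both identities are equivalent to $[Q_0,\Delta_t]=0$ together with the single relation $[Q_0,\partial_{R_t}]=-\partial_t\Delta_t$. The second identity is exactly this after moving $Q_0$ past $\dd t$, and integrating it over $[\epsilon,L]$ gives the first. The gap is that the sign in this relation, which is the only non-structural content of the lemma, is never determined, and your displayed steps give the wrong one.

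With your rules $Q_0R_t=\partial_tK_t$ and $[Q_0,\partial_J]=\partial_{Q_0J}$, one gets $[Q_0,\partial_{P_{\epsilon,L}}]=\Delta_L-\Delta_\epsilon$. The four terms in the second identity then add up to $2\,\dd t\,\partial_t\Delta_t\neq0$. Your own second-identity paragraph requires $Q_0R_t=-\partial_tK_t$, which contradicts your first paragraph.

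The remedy you suggest is not available. The orientation of $P_{\epsilon,L}=\int_\epsilon^LR_t\,\dd t$ and the ordering in $R_t=Dq_0E_t$ are fixed by \eqref{eq:flat-kernels}. With $D$ and $Dq_0$ applied to the same slot of the closed kernel $E_t$, there is no sign freedom left at the kernel level:
\[
 Q_0R_t=[\bdel,Dq_0]E_t=-D[\bdel,q_0]E_t=-DH_0E_t=\partial_tK_t,
\]
using $\bdel D=-D\bdel$ and that $Dq_0$ is even. Since $q_0D=-Dq_0$, the opposite ordering would give $-\partial_tK_t$; this is why the familiar $QP=K_\epsilon-K_L$ cannot simply be transcribed here.

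So the needed minus sign has to come from the contraction convention. That means how $J$ is paired against the inputs of a functional, and how $Q_0$ acts dually on functionals; this decides between $[Q_0,\partial_J]=\partial_{Q_0J}$ and $[Q_0,\partial_J]=-\partial_{Q_0J}$. You must fix that convention and check that it yields $[Q_0,\partial_J]=-\partial_{Q_0J}$.

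\textbf{The $-\zeta D$ part of $Q_0$.} On $K_t$ it vanishes by $D^2=0$ and the $D$-symmetry of $E_t$, as you say. On $R_t$, however, the second-slot term becomes $\pm(Dq_0D\otimes1)E_t$, which vanishes only because $q_0D=-Dq_0$. This identity is missing from your list, and you simply write $Q_0R_t=\bdel(Dq_0E_t)$.

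Also, ``only the $\zeta^0$ part of $Q_0$ matters for the contraction'' is not the right reason. On functionals, $-\zeta D$ turns an $A_{k+1}$ input into a $DA_k$ input; this is the reduced symbol $-\sum_i(\mathbf p_i\cdot\eta_i)V_{g;\bm k+e_i}$ used in Proposition~\ref{prop:flat-master}. So it does feed primary slots. Its contribution to $[Q_0,\partial_{K_t}]$ and $[Q_0,\partial_{R_t}]$ vanishes because $D$ annihilates $K_t$ and $R_t$ in each slot.
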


For a local interaction $J$, the connected renormalization group flow is
defined by \cite{costello,costello-li}
\begin{equation}\label{eq:RG}
 e^{W(P,J)/\h}=e^{\h\partial_P}e^{J/\h}.
\end{equation}
Its connected graphs carry the usual factor $1/|\Aut(\Gamma)|$ and
contain at least one original interaction vertex. A vertex of genus $g_v$
contributes $\h^{g_v}$, so the total power is
$\h^{b_1(\Gamma)+\sum_vg_v}$. Graphs with no external legs are included.

\subsection{Green forms and the local interaction}
Let $(C;z_1,\ldots,z_n)\in\cM_{g,n}$, with $2g-2+n>0$.
Choose smooth Hermitian metrics $h_i$ on the cotangent lines $L_i$ over
$\Mbar_{g,n}$ \cite{knudsen}, using the conventions
\[
 \dc=\frac{i}{2\pi}(\bdel-\partial),\qquad
 \psi_i=-\dd\dc\log\|\ell_i\|_{h_i}.
\]
Here $\ell_i$ is a local nonzero holomorphic frame.
The operators $\dd,\dc$ differentiate the curve and its marked points. On the other hand, we use $\dd_C$ for the de Rham differential on a single curve $C$. These differentials do not apply to the sources introduced below. 

For a conserved source, we mean a set of vectors $P=(\mathbf{p}_i)$ in $\R^N$ satisfying $\sum_i \mathbf{p}_i=0$. Let $u_P$ be the vector-valued
Green potential on the curve $C$, unique modulo constants, satisfying
\[
 \dd_C*\dd_Cu_P=2\pi\sum_i \mathbf{p}_i\delta_{z_i},\qquad \sum_i \mathbf{p}_i=0.
\]
In a coordinate $w_i$ centered at $z_i$, write
$u_P=\mathbf{p}_i\log|w_i|+u_i^{\mathrm{reg}}+o(1)$ and put
$a_i=\|\dd w_i|_{z_i}\|_{h_i}$. Define
\begin{equation}\label{eq:finite-energy}
 F_h(P)=\frac12\sum_i \mathbf{p}_i\cdot u_i^{\mathrm{reg}}
             +\frac12\sum_i|\mathbf{p}_i|^2\log a_i,
\end{equation}
and
\begin{equation}\label{eq:theta-definition}
 L_P=\frac12\sum_i|\mathbf{p}_i|^2\psi_i,\qquad
 \beta=\dc F_h,\qquad \Theta_g(P)=L_P+\dd\beta.
\end{equation}
The Euclidean pairing is extended bilinearly when the sources are
complexified.

The theta form $\Theta_g(P)$ serves as a basic building block
in our construction of the quantum interaction.
We give another useful characterization of this form in terms of the periods of the Green current. Let
\[
 j_P=\frac{1}{2\pi}*\dd_Cu_P
\]
be the current associated with the Green potential.
Choose locally a symplectic homology basis
$(A_\mu,B_\mu)_{\mu=1}^g$, represented by cycles avoiding
the marked points and transported continuously as the curve varies.
Define
\[
 x_\mu^a=\int_{B_\mu}j_P^a,\qquad
 y_\mu^a=-\int_{A_\mu}j_P^a,
 \qquad 1\leq a\leq N,\quad 1\leq\mu\leq g,
\]
where $j_P^a$ denotes the $a$th component of $j_P$.
\begin{lemma}[Green forms]\label{lem:betti}
The function $F_h$ is well-defined, independent of the coordinates $w_i$
and the additive constant of $u_P$, and quadratic in $P$.
The form $\Theta_g$ is independent of the metrics $h_i$. In terms of periods $(x_\mu^a,y_\mu^a)$,
\begin{equation}\label{eq:betti}
 \Theta_g(P)=\sum_{a=1}^N\sum_{\mu=1}^g
                 \dd x_\mu^a\wedge\dd y_\mu^a,\qquad
 \dd\Theta_g=0,\qquad \Theta_g^{Ng+1}=0.
\end{equation}
 More generally, if the sources span
a space of dimension $\rho$, then $\Theta_g^{\rho g+1}=0$ (rank identity).
\end{lemma}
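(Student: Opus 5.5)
Since $u_P$ is linear in $P$ and the sources enter only through $\mathbf p_i$, all arguments can be done componentwise in $a=1,\dots,N$. This reduces everything to a single real source vector $p=(p_i)$ with $\sum_i p_i=0$, with the final formulas summed over $a$.

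\emph{Well-definedness of $F_h$.} Changing $u_P$ by a constant $c$ changes $\frac12\sum_i \mathbf p_i\cdot u_i^{\mathrm{reg}}$ by $\frac12 c\cdot\sum_i\mathbf p_i=0$. Changing the coordinate to $w_i'=\lambda w_i+O(w_i^2)$ changes $u_i^{\mathrm{reg}}$ by $-\mathbf p_i\log|\lambda|$. It also changes $\log a_i$ by $+\log|\lambda|$, because $\|\dd w_i'\|=|\lambda|\,\|\dd w_i\|$. These two contributions cancel in \eqref{eq:finite-energy}. Quadraticity is immediate, since $u_P$, and hence $u_i^{\mathrm{reg}}$, is linear in $P$.

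\emph{Metric independence of $\Theta_g$.} Replace $h_i$ by $e^{\phi_i}h_i$. Then $\log a_i$ shifts by $\phi_i/2$, so $F_h$ shifts by $\frac14\sum|\mathbf p_i|^2\phi_i$. At the same time $\psi_i$ shifts by $-\frac12\dd\dc\phi_i$, so $L_P$ shifts by $-\frac14\sum|\mathbf p_i|^2\dd\dc\phi_i$. This exactly cancels the change $\dd\dc$ of the shift in $F_h$, so $\Theta_g$ is unchanged.

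\emph{Betti formula.} This is the main step. Closedness and the nilpotency bound follow from it: $\sum_{a,\mu}\dd x^a_\mu\wedge\dd y^a_\mu$ is visibly closed. It is a sum of $Ng$ terms of the form $\alpha\wedge\beta$ with $\alpha^2=\beta^2=0$, so its $(Ng+1)$-st power vanishes. Similarly, if the sources span a $\rho$-dimensional space, a change of basis in $\R^N$ (after rotating, the Euclidean pairing is preserved) writes $\Theta_g$ as a sum of $\rho g$ such terms, giving the rank identity.

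To prove the Betti formula, I will use the decomposition $\dd u_P=\Re(\text{third-kind differential }\omega_P)$. Here $\omega_P$ is the unique meromorphic differential with simple poles of residue $\mathbf p_i$ at $z_i$ and purely imaginary periods. The periods of $j_P$ then become the imaginary parts of $\omega_P$-periods.

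The strategy is to compute $\dd\dc F_h$ by a variational (Hadamard/Rauch type) formula. Under a Beltrami deformation $\mu$ together with motion of the points, the first variation of the regularized energy is a Schiffer-type expression $\int_C\mu\,\omega_P^2$ plus local terms at the punctures. The local terms are precisely what $\log a_i$ and $\psi_i$ absorb.

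The cleanest route is to work locally on $\cM_{g,n}$ and compare two real-analytic potentials for the same form. Using the Riemann bilinear relations, I will write $F_h$ as the sum of two pieces. The first is a term $\Phi$ quadratic in the periods, namely $\frac12\sum y^a_\mu x^a_\mu$, up to the choice of symplectic frame. The second is a term $G$ built from prime-form/Arakelov quantities: $\frac12\sum_{i\ne j}\mathbf p_i\cdot\mathbf p_j\log|E(z_i,z_j)|$ corrected by $\log a_i$. Along holomorphic directions, $\omega_P$ depends holomorphically on the moduli apart from its imaginary-period normalization.

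I then need to check two things. First, $\dd\dc G=-L_P$ modulo the correction coming from the period normalization. Second, $\dd\dc\Phi$ produces $\sum\dd x\wedge\dd y$. The second point should follow because $x,y$ are real and imaginary parts of linear combinations of holomorphic functions (the periods of the holomorphically normalized differential) twisted by the period matrix $\tau$.

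The hardest part is the bookkeeping of this Hodge-theoretic computation. It requires showing that $\dd\dc$ of the period-quadratic term gives exactly the symplectic form on the torus fibre coordinates, essentially as in the computation of the pullback of the canonical theta/Chern form of the Poincaré bundle along the Abel--Jacobi map $P\mapsto\sum\mathbf p_i z_i$. My first attempt is to identify $\Theta_g(P)$ with the pullback under this Abel--Jacobi section of the translation-invariant $(1,1)$-form $\sum\dd x\wedge\dd y$ on the universal Jacobian $(\mathrm{Jac})^N$, using Hain's formula for the normal function of the Ceresa/Abel--Jacobi section. The $\psi_i$-terms should then match the Arakelov/Green-function correction at the diagonals.

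Independence of the choice of symplectic basis follows since $\mathrm{Sp}(2g,\Z)$ preserves $\sum\dd x\wedge\dd y$, and constant integer shifts from cycle transport do not affect $\dd$.
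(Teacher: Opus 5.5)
The paper states this lemma without proof (details are deferred to the companion paper), so your argument has to stand on its own. Several parts are correct: your treatment of $F_h$ (constant shift, change of coordinate $w_i$, quadraticity), the metric independence of $\Theta_g$, and the deduction of $\dd\Theta_g=0$, $\Theta_g^{Ng+1}=0$ and the rank identity from the Betti formula. One minor point: re-transporting cycles in $C\setminus\{z_i\}$ shifts $(x,y)$ by constant combinations of the $\mathbf p_i$. These are integers only for integral sources, but $\dd$ kills them anyway.

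The gap is in the one substantive step, the Betti formula, and the decomposition you commit to is wrong. The period part of $F_h$ is not $\tfrac12\sum x^a_\mu y^a_\mu$, and $\dd\dc$ of that function does not give $\sum\dd x\wedge\dd y$. Take $g=1$, $N=1$, a fixed curve with $\tau=i$ and moving points. Then $\zeta=\sum_ip_iz_i$ has $x=\operatorname{Re}\zeta$, $y=\operatorname{Im}\zeta$, so $\tfrac12xy=\tfrac14\operatorname{Im}(\zeta^2)$ is pluriharmonic on this slice while $\dd x\wedge\dd y\neq0$.

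What is missing is the exact form of the period-normalization term. Write $\omega_P=\Omega_P+\sum_\mu c_\mu\omega_\mu$, where $\Omega_P=\sum_j\mathbf p_j\,\dd_z\log E(z,z_j)$ is the $A$-normalized third-kind differential. Its $B$-periods are $2\pi i\zeta_\mu$, with $\zeta=\sum_i\mathbf p_i\int_{z_0}^{z_i}\vec\omega$. Imaginary periods force $c_\mu=-2\pi i\,y_\mu$, and then $\zeta=x+\tau y$, i.e. $y=(\operatorname{Im}\tau)^{-1}\operatorname{Im}\zeta$. Hence $u_P=\sum_j\mathbf p_j\log|E(z,z_j)|+2\pi\sum_\mu y_\mu\cdot\operatorname{Im}\int_{z_0}^z\omega_\mu$. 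With coordinates $w_i$ varying holomorphically over a chart, this gives
\[
 F_h=\tfrac12\sum_{i\neq j}\mathbf p_i\cdot\mathbf p_j\log|E(z_i,z_j)|
 +\tfrac12\sum_i|\mathbf p_i|^2\log\|\dd w_i\|_{h_i}
 +\pi\sum_a (y^a)^{T}(\operatorname{Im}\tau)\,y^a .
\]
The first term is pluriharmonic, and the second has $\dd\dc=-L_P$ exactly. So there is no leftover ``correction'', and no diagonal terms either: the $\psi_i$ come only from the $\log a_i$ part. For the last term, use $\partial y=(2i\operatorname{Im}\tau)^{-1}(\dd\zeta-\dd\tau\,y)=(2i\operatorname{Im}\tau)^{-1}(\dd x+\tau\,\dd y)$. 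Including the variation of $\tau$, this gives
\[
 \dd\dc\bigl(\pi\,y^{T}(\operatorname{Im}\tau)\,y\bigr)
 =\tfrac i2\,(\dd x+\tau\,\dd y)^{T}(\operatorname{Im}\tau)^{-1}\wedge(\dd x+\bar\tau\,\dd y)
 =\sum_\mu\dd x_\mu\wedge\dd y_\mu ,
\]
which is the Betti formula.

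Your alternative route via Hain does not fill this in. The Ceresa normal function is unrelated: it maps $\cM_g$ to the intermediate Jacobian of $\wedge^3H$. The relevant object is the Abel--Jacobi normal function of the degree-zero divisor. Even there, a cohomological formula would not give the pointwise identity of forms without essentially this potential computation.
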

\begin{remark}[Abel--Jacobi interpretation]
When $\mathbf{p}_i\in\mathbb Z^N$, each divisor
$\sum_i \mathbf{p}_i^a z_i$ has degree zero, and
$(x^a,y^a)$ modulo integers gives the Betti coordinates
of its usual Abel--Jacobi image in $\operatorname{Jac}(C)$.
Under the resulting map to the family of products
$\operatorname{Jac}(C)^N$, the form $\Theta_g(P)$ is the
pullback of the canonical closed two-form whose restriction
to each fiber represents the product principal polarization.
For general real sources, we use the period formulas above directly.
\end{remark}
The Green and period conventions agree with \cite{fay}.  The closedness and rank identity in Lemma \ref{lem:betti} will
be used in the boundary argument later.

For $\bm k=(k_1,\ldots,k_n)$ with $k_i\geq0$, define
\begin{equation}\label{eq:vertices}
 V_{g;\bm k}(P)=\int_{\cM_{g,n}}
       \frac{\Theta_g(P)^{Ng}}{(Ng)!}\prod_i\psi_i^{k_i},
 \qquad |\bm k|=(3-N)g+n-3.
\end{equation}
The integral is an orbifold integral. Unstable data and incorrect degrees
give zero. The degree condition $|\bm k|=(3-N)g+n-3$ follows from $\dim_\C\cM_{g,n}=3g-3+n$.

Replace each $\mathbf{p}_i$ by a holomorphic derivative on the $i$th input to set
\begin{equation}\label{eq:flat-interaction}
 I(A)=\sum_{\substack{g,n,\bm k\geq0\\2g-2+n>0}}
 \frac{\h^g}{n!}\Tr\left.
 V_{g;\bm k}(\partial_{z_1},\ldots,\partial_{z_n})
 \prod_iA_{k_i}(z_i)\right|_{z_i=z}=I_0+I_+.
\end{equation}
Here $I_+$ includes the powers of $\h$ in positive genus. Since the
coefficients are defined on $\sum_i \mathbf{p}_i=0$, a polynomial representative
is chosen before substitution. Different representatives differ by a
total holomorphic derivative and give the same integrated local functional.
The convergence and polynomiality needed for this definition are stated
in Lemma \ref{lem:vertex-properties} below.

\begin{lemma}[Classical coefficients]\label{lem:classical}
At genus zero,
\begin{equation}\label{eq:classical}
 V_{0;\bm k}=\frac{(n-3)!}{\prod\limits_i k_i!},\qquad
 I_0=\sum_{n\geq3}\frac1{n!}
 \sum_{|\bm k|=n-3}\frac{(n-3)!}{\prod\limits_i k_i!}
                    \Tr(A_{k_1}\cdots A_{k_n}).
\end{equation}
Thus $I_0$ is the classical descendant BCOV interaction, and its
restriction to primary fields is $\Tr(A_0^3)/6$.
\end{lemma}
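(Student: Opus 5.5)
At genus zero the Jacobian is trivial: there are no cycles $A_\mu,B_\mu$. So by Lemma \ref{lem:betti} the theta form $\Theta_0(P)$ vanishes identically. The factor $\Theta_0^{0}/0!$ in \eqref{eq:vertices} is therefore the constant $1$, independent of the source $P$. The coefficient reduces to a pure intersection number,
\[
 V_{0;\bm k}=\int_{\Mbar_{0,n}}\prod_{i=1}^n\psi_i^{k_i},\qquad |\bm k|=n-3.
\]
I first justify replacing $\cM_{0,n}$ by $\Mbar_{0,n}$. The integrand $\prod_i\psi_i^{k_i}$ is built from the Chern forms of the smooth metrics $h_i$ on $L_i$ over $\Mbar_{0,n}$. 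It is thus a smooth top form on the compactification. The boundary divisor has measure zero, so the integral over the open part equals the integral over $\Mbar_{0,n}$. That integral is the cohomological pairing, independent of the choice of $h_i$.

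Next I evaluate the intersection number with the standard genus-zero formula
\[
 \int_{\Mbar_{0,n}}\psi_1^{k_1}\cdots\psi_n^{k_n}=\binom{n-3}{k_1,\ldots,k_n}=\frac{(n-3)!}{\prod_ik_i!}.
\]
I will prove this by induction on $n$ using the string equation. The base case is $n=3$, where $\Mbar_{0,3}$ is a point. When $|\bm k|=n-3$ and $n\geq4$, some $k_j=0$, so the string equation applies:
\[
 \int_{\Mbar_{0,n}}\prod_i\psi_i^{k_i}=\sum_{i\neq j,\,k_i>0}\int_{\Mbar_{0,n-1}}\psi_i^{k_i-1}\prod_{l\neq i,j}\psi_l^{k_l}.
\]
The multinomial coefficients satisfy the same recursion, since $\sum_i k_i=n-3$. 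This settles the formula for $V_{0;\bm k}$.

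Because $V_{0;\bm k}$ is independent of $P$, substituting holomorphic derivatives in \eqref{eq:flat-interaction} has no effect. The $\hbar^0$ part of $I$ is therefore
\[
 I_0=\sum_{n\geq3}\frac1{n!}\sum_{|\bm k|=n-3}\frac{(n-3)!}{\prod_ik_i!}\Tr(A_{k_1}\cdots A_{k_n}),
\]
which is \eqref{eq:classical}. I then compare with Costello--Li's classical interaction. In their convention it is $\Tr\langle e^{A}\rangle_0=\sum_n\frac1{n!}\int_{\Mbar_{0,n}}\Tr\prod_i\psi_i^{k_i}A_{k_i}$. Replacing their descendant parameter $t$ by $-\zeta$ affects only signs attached to the $A_k$, and these are absorbed in the definition of $A(\zeta)$. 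The two expressions then agree term by term. For the primary restriction I set all $k_i=0$. This forces $n=3$, giving $\frac1{3!}\Tr(A_0^3)$.

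I expect the \emph{main obstacle} to be the first step. It requires checking that the $\psi_i$ defined via the metrics $h_i$ on $\Mbar_{g,n}$ extend as smooth forms at genus zero. It also requires confirming that $\Theta_0$ really vanishes, rather than leaving a residual $\dd\beta$ term from $F_h$. For the latter point, I would note that at $g=0$ the potential $u_P$ is explicit, $u_P=\sum_i\mathbf p_i\log|w-z_i|$ on $\mathbb P^1$. Then $L_P+\dd\dc F_h=0$ can be verified directly, consistent with \eqref{eq:betti}. The remaining bookkeeping of sign conventions is routine.
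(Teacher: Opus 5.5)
Your proof is correct and is essentially the paper's argument: at $g=0$ the factor $\Theta_0^{Ng}/(Ng)!$ equals $1$, so $V_{0;\bm k}$ is the $P$-independent intersection number $\int_{\Mbar_{0,n}}\prod_i\psi_i^{k_i}$. The paper simply quotes the multinomial formula for this number, which your string-equation induction proves. The one thing to correct is your ``main obstacle'': the exponent $Ng$ is zero, so $\Theta_0^{0}=1$ whether or not $\Theta_0$ vanishes, and no check of $L_P+\dd\dc F_h=0$ is needed; likewise, smoothness of the $\psi_i$ on $\Mbar_{0,n}$ is already built into the setup, since the metrics $h_i$ are chosen over $\Mbar_{g,n}$.
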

The coefficient formula is the genus-zero  intersection formula 
$
\int_{\cM_{0,n}}\prod\limits_{i=1}^n\psi_i^{k_i}=\binom{n-3}{k_1,\cdots, k_3}.
$


\subsection{Extension and boundary values of the Green forms}
We record the two facts about degeneration needed below: smooth extension
of the forms and their restrictions to the boundary.

Let $\Mhat_{g,n}$ be the logarithmic real resolution of
$\Mbar_{g,n}$ obtained as follows. In a plumbing coordinate, write
\[
 q_e=e^{-T_e-2\pi i\tau_e},\qquad
 \rho_e=T_e^{-1},\qquad \tau_e\in\R/\Z.
\]
Retain $\tau_e$ at $\rho_e=0$ and resolve the coordinate corners by
real blow-ups, beginning with the deepest corners. Thus the boundary
records the relative rates of the neck lengths as well as the node
angles. This is the conductance version of the resolution in
\cite[Definition 1.10 and Lemma 1.13]{grushevsky-krichever-norton}.
The construction is understood in orbifold charts.

For a stable graph $H$ with at least one edge, let
$\partial_H \subset \widehat{\mathcal M}_{g,n}$
denote the corresponding boundary face. Write $E=|E(H)|$, $h=b_1(H)$,
and
\[
 X_H=\prod_v\cM_{g_v,n_v},\qquad g=h+\sum_vg_v,\qquad
 \lambda_e=\frac{T_e}{\sum_fT_f}.
\]
The face is a node-angle torus bundle over
$X_H\times\Delta_H^{\res}$, modulo graph automorphisms.
Here $\Delta_H^{\res}$ is the projective length simplex with its relative
rates resolved, and its interior is $\lambda_e>0$, $\sum_e\lambda_e=1$. Corners describe successive degenerations.

Orient the edges and let $P_v$ be the sum of the original sources at $v$.
The graph current $j^0=(j_e^0)$ is the unique minimizer
\begin{equation}\label{eq:graph-current}
 j^0=\argmin_j\frac12\sum_e\lambda_e|j_e|^2,
 \qquad
 \sum_{e\text{ outgoing at }v}j_e-
 \sum_{e\text{ incoming at }v}j_e=P_v.
\end{equation}
At the outgoing and incoming ends of an edge, assign node sources
$-j_e^0$ and $+j_e^0$, respectively. Together with the original sources
they form a conserved collection $P'_v$ on each component. The two
half-edges of a self-loop are treated separately. Set $f_e=|j_e^0|^2/2$.

\begin{lemma}[Smooth extension]\label{lem:green-extension}
The forms $\beta$, $\Theta_g$, and all their source derivatives extend
smoothly to $\Mhat_{g,n}$. The currents $j^0$ and the functions $f_e$
are smooth on $\Delta_H^{\res}$. These extensions are compatible on
overlapping plumbing charts and along every nested boundary stratum.
\end{lemma}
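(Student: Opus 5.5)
The plan is to work locally in a plumbing chart of $\Mbar_{g,n}$ near a boundary stratum indexed by a stable graph $H$. We express $\beta$ and $\Theta_g$ through the period description of Lemma~\ref{lem:betti}, and control the degeneration of the periods $x_\mu^a,y_\mu^a$ and of $F_h$ using Fay's degeneration formulas \cite{fay} for the Green function and for the period matrix under plumbing $q_e=e^{-T_e-2\pi i\tau_e}$.

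\emph{Step 1: reduce to the periods.} By Lemma~\ref{lem:betti}, $\Theta_g=\sum\dd x\wedge\dd y$ and $\beta=\dc F_h$. It therefore suffices to show that $F_h$, and the periods after a suitable choice of homology basis, are smooth functions of $(\rho_e,\tau_e)$, the blow-up coordinates, and the vertex moduli. Source derivatives come for free, since everything is polynomial (quadratic) in $P$.

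\emph{Step 2: split the homology basis.} We split the homology basis into vanishing cycles $A_e$ around the necks, their dual cycles $B_e$ crossing the necks (which generate $b_1(H)$), and the cycles of the vertex curves.

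\emph{Step 3: model the Green potential.} On a long neck of length $T_e$, the Green potential is approximately linear in the neck coordinate, carrying a flux $j_e$ through $A_e$. Minimizing the Dirichlet energy over the fluxes yields the electrical network problem \eqref{eq:graph-current}, with conductances $\propto T_e^{-1}$. After normalization by the total length, these conductances become the $\lambda_e$.

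\emph{Step 4: expand the periods and $F_h$.} Consequently, $y$ along $A_e$ equals $-j_e$, which tends to $-j_e^0(\lambda)+O(\rho)$. The period along $B_e$ picks up $\tau_e j_e$ plus smooth vertex contributions. Moreover, $F_h=\sum_e T_e f_e+(\text{vertex energies with sources }P'_v)+O(\rho)$.

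\emph{Step 5: show the divergent term is exact.} The term $\sum T_e f_e$ is divergent, but its contribution to $\beta=\dc F_h$ only involves $\dc T_e$, which is proportional to $\dd\tau_e$, with coefficient $f_e$. The factor $T_e=\rho_e^{-1}$ must then be absorbed. We handle this by writing $F_h$ on the face as a homogeneous degree-one function of $(T_e)$ plus a bounded part. Using $\dd$ of a homogeneous function together with the $\lambda$ coordinates, we check that $\dd\beta$ stays smooth. We expect the $1/\rho$ piece to cancel against the $\dd x\wedge\dd y$ description, since the period form is visibly bounded: $\dd(\tau_e j_e)\wedge \dd j_e$ is smooth in $\lambda$ and $\tau$.

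\emph{Step 6: corners, compatibility, and the network quantities.} At corners, where several necks degenerate at different rates, we iterate the electrical-network argument on nested graph contractions. The real blow-ups are exactly what make $\lambda_e$, and hence $j^0$, smooth. The network minimizer $j^0$ is a rational function of $\lambda$ with positive denominator (the Kirchhoff spanning-tree polynomial). It is smooth on the open simplex, and its limit at faces is determined by the contracted and deleted graphs; in blow-up coordinates this limit is smooth. Compatibility on overlapping charts follows because $\Theta_g$ and $\beta$ are intrinsically defined on the interior, which is dense. Metric independence is given by Lemma~\ref{lem:betti}.

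\emph{Main obstacle.} We expect the hardest step to be the uniform error estimate $O(\rho)$ with all derivatives. This requires exponentially small corrections $O(e^{-T})$ in the plumbing parameters to be smooth in $\rho=1/T$; this holds because $e^{-1/\rho}$ is flat at $\rho=0$. It also requires the correction to the network approximation of the Green function to be a smooth function of the $\rho_e$, and not merely $O(\rho)$. Our first attempt will be to write the Green potential via the Schur complement of the period matrix, $\Omega_{\text{neck}}=iT/2\pi+\text{smooth}$, and to invert explicitly in terms of $\rho$ and $\lambda$.
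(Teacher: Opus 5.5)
The paper states this lemma without proof and defers the details to \cite{li-wang}. Your architecture fits what it does indicate: plumbing charts, the Kirchhoff minimizer, real blow-ups of the conductance corners (the paper's resolution is the conductance version of Grushevsky--Krichever--Norton), and uniqueness of extensions from the dense interior. Treating $\Theta_g$ through the periods and $j^0$ through Kirchhoff polynomials is also reasonable.

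\textbf{The gap is in $\beta$.} Your Step~1 reduction is false on two counts.
\begin{enumerate}
\item $F_h$ is not smooth: it contains a term linear in $T_e=\rho_e^{-1}$.
\item Smoothness of a function $G$ in $(\rho_e,\tau_e)$ does not give smoothness of $\dc G$. From $q_e=e^{-T_e-2\pi i\tau_e}$ one gets $\dc T_e=\dd\tau_e$ and $\dc\tau_e=-\tfrac{1}{4\pi^2}\dd T_e$. In the $q_e$-direction this gives
\[
 \dc G=-\rho_e^2(\partial_{\rho_e}G)\,\dd\tau_e+\tfrac{1}{4\pi^2}\rho_e^{-2}(\partial_{\tau_e}G)\,\dd\rho_e .
\]
\end{enumerate}
So your $O(\rho)$ remainder in Step~4, even if smooth in $\rho$, puts a divergent $\rho_e^{-1}\dd\rho_e$ term into $\beta$ unless its $\tau_e$-dependence vanishes to second order. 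Your ``main obstacle'' asks only for smoothness in $\rho$, which is not enough.

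Step~5 does not repair this. Checking that $\dd\beta$ is smooth does not prove the lemma: the lemma needs $\beta$ itself, and Proposition~\ref{prop:boundary-identity} restricts $D_\eta\beta$ to $\partial\Mhat_{g,n}$. Nor can a singular piece of $\beta$ ``cancel against the $\dd x\wedge\dd y$ description''. That formula computes only $L_P+\dd\beta$, so boundedness of $\Theta_g$ says nothing about $\beta$.

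\textbf{Two structural facts are missing.}

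First, up to terms flat at $\rho=0$, $F_h$ is minus a variational energy that is linear in the neck lengths for fixed currents:
\[
F_h\simeq-\min_j\bigl[\tfrac12\textstyle\sum_eT_e|j_e|^2+E(j;m)\bigr].
\]
Here $E$ is the bounded quadratic vertex energy in the vertex moduli $m$, and $j$ ranges over the $T$-independent affine space of Kirchhoff-compatible currents. With the paper's normalization $F_h$ is minus the regularized Dirichlet energy, so the divergent term is $-\sum_eT_ef_e$, not $+\sum_eT_ef_e$. By the envelope theorem, $\partial_{T_e}F_h=-\tfrac12|j^*_e|^2$ is bounded. Hence $\dc$ of the divergent part is $-\sum_e\tfrac12|j^*_e|^2\,\dd\tau_e$, and there is no $1/\rho$ piece to absorb or cancel. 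Your homogeneity remark already gives this for the leading term, since $\partial_{T_e}$ of a degree-one homogeneous function has degree zero; you drew the wrong conclusion from it.

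Second, $F_h$ depends on the twists only through $q_e$ itself, that is, up to $O(e^{-T_e})$. In your Schur-complement approach the Green potential involves only $\mathrm{Im}\,\Omega$ and imaginary parts of Abel--Jacobi integrals. These contain $T_e/2\pi$ but not $\tau_e$; the twist sits in $\mathrm{Re}\,\Omega$, hence only in the periods $x$.

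With both facts,
\[
\beta=-\sum_e\tfrac12|j^*_e|^2\,\dd\tau_e-(\dc_mE)(j^*)+(\text{flat}),
\]
which is smooth as soon as the minimizer $j^*$ is smooth in the blow-up coordinates (your Step~6). Its boundary value is $b-\sum_ef_e\,\dd\tau_e$, as recorded in Proposition~\ref{prop:energy-degeneration}.
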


\begin{proposition}[Boundary values]\label{prop:energy-degeneration}
In a local trivialization of the node-angle torus, normalize the
cotangent metrics at the node markings by $\|\dd w_\alpha\|=1$.
The boundary restrictions are
\begin{equation}\label{eq:boundary-forms}
 \begin{aligned}
 \Theta_g|_{\partial_H}
    &=\sum_v\Theta_{g_v}\bigl(P'_v(j^0)\bigr)
             +\dd_\lambda b-\sum_e\dd f_e\wedge\dd\tau_e,\\
 \beta|_{\partial_H}
    &=b-\sum_e f_e\,\dd\tau_e,
 \end{aligned}
\end{equation}
where
\begin{equation}\label{eq:boundary-basic}
 b=\sum_v\beta_{g_v}\bigl(P'_v(j^0)\bigr).
\end{equation}
Here $\dd_\lambda$ differentiates the relative lengths.
In particular, at fixed $\lambda$ and node angles,
\begin{equation}\label{eq:boundary-components}
 \Theta_g|_{TX_H}=\sum_v\Theta_{g_v}\bigl(P'_v(j^0)\bigr).
\end{equation}
These formulas may be differentiated in conserved source directions and
applied successively at corners.
\end{proposition}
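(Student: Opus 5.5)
We will prove Proposition~\ref{prop:energy-degeneration} by computing the asymptotics of the Green potential $u_P$ in plumbing coordinates, and then reading off $F_h$, and hence $\beta=\dc F_h$ and $\Theta_g$, at the boundary face $\partial_H$ of $\Mhat_{g,n}$. We work in a single plumbing chart with edges $e$, node coordinates $w_\alpha,w'_\alpha$ normalized by $\|\dd w_\alpha\|=1$, and $w_\alpha w'_\alpha=q_e$. The curve is then a union of the components $C_v$ minus small discs, joined by long cylinders of modulus $T_e/2\pi$. We treat a single face first and handle corners afterwards.

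\textbf{Step 1: the graph current.} On each neck, the harmonic vector function $u_P$ is approximately linear in $\log|w|$, with slope equal to the flux $j_e$ through the neck. It therefore contributes energy $\approx \frac{1}{2}T_e|j_e|^2$, up to the $2\pi$ normalization. On each component, $u_P$ approximates $u_{P'_v(j)}$, the Green potential with the node fluxes as extra sources. Minimizing the Dirichlet energy of the regularized potential shows that the fluxes converge to the minimizer \eqref{eq:graph-current} with weights $\lambda_e=T_e/\sum_f T_f$. Here we use that the energy is dominated by $\sum_e T_e|j_e|^2$ as $\rho_e\to0$, so only the ratios of the $T_e$ matter at leading order. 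Smoothness of $j^0$ and $f_e$ in $\lambda$, including on the resolved simplex, is Lemma~\ref{lem:green-extension}.

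\textbf{Step 2: expansion of $F_h$.} Decomposing $\frac12\sum_i\mathbf p_i\cdot u_i^{\mathrm{reg}}$ via Green's identity into component contributions and neck contributions gives
\[
 F_h=\sum_v F_{h,g_v}\bigl(P'_v(j^0)\bigr)+\sum_e\frac12|j_e^0|^2\,\log|q_e|+o(1)=\sum_v F_{h,g_v}(P'_v)-\sum_e f_eT_e+o(1).
\]
The remainder is smooth in $\rho$ and vanishes at $\rho=0$, and its derivatives also vanish there. A quadratic expansion about $j^0$ makes the $O(1)$ cross terms cancel by stationarity. We then apply $\dc$. Because $\log|q_e|=-T_e$, we have $\dc(-T_e)=\frac{i}{2\pi}(\bdel-\partial)\log|q_e|$, which is a multiple of $\dd\tau_e$; with the stated conventions, $\dc(-f_eT_e)$ restricts to $-f_e\,\dd\tau_e$ modulo terms $T_e\,\dc f_e$. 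The latter involve only $\lambda$-derivatives, which are tangent to the simplex and are of order $\rho$ after the $\lambda$-reparametrization. This is where care is needed. Stationarity of $j^0$ implies that $\partial_\lambda$ of the total energy equals $\sum f_e\,\dd T_e$-type terms, which are absorbed into the angle term. The result is $\beta|_{\partial_H}=b-\sum_e f_e\,\dd\tau_e$. Taking $\dd$ and using $\Theta_g=L_P+\dd\beta$ yields \eqref{eq:boundary-forms}. Here $L_P$ splits because the $\psi_i$ of the original markings restrict to the component $\psi$'s, and the node $\psi$ contributions for the sources $\pm j^0_e$ are absorbed by the normalization $\|\dd w_\alpha\|=1$ together with the $\log a$ term in \eqref{eq:finite-energy}. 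Restricting to $TX_H$ kills $\dd_\lambda$ and $\dd\tau_e$, which gives \eqref{eq:boundary-components}.

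\textbf{Main obstacle.} The main obstacle is making the $o(1)$ remainder smooth on $\Mhat_{g,n}$ in the variables $\rho_e$, so that restriction commutes with $\dc$ and $\dd$. In particular, the terms $T_e\,\dc(\cdot)$ must genuinely tend to zero rather than to finite boundary contributions. We would first try to derive this from Lemma~\ref{lem:green-extension} applied to $\beta$ itself. That lemma guarantees that $\beta$ extends smoothly, so it suffices to identify the restriction by computing its pairing on the tangent directions $TX_H$, $\partial_{\lambda}$ and $\partial_{\tau_e}$ separately, using the period formula \eqref{eq:betti} as a cross-check: the periods around the neck cycles tend to $\tau_e$-weighted fluxes $j^0_e$. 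Finally, source derivatives follow by differentiating these expansions, since everything is polynomial in $P$. Corners follow by iterating the argument on nested faces, using the compatibility statement of Lemma~\ref{lem:green-extension}.
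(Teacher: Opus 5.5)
The note states this proposition without proof (details are deferred to the forthcoming paper), so your sketch has to stand on its own. Your route is the natural one and matches the shape of the stated formulas. You expand $F_h=\sum_vF_{g_v}(P'_v(j^0))-\sum_ef_eT_e+(\text{remainder})$ along the plumbing family, apply $\dc$ using $\dc T_e=\dd\tau_e$, and then apply $\dd$. Your leading expansion and your treatment of $L_P$ (the node $\psi$-forms vanish under $\|\dd w_\alpha\|=1$) are fine.

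The gap is in the step that produces the angle coefficient. You assert that the terms $T_e\,\dc f_e$ are $\lambda$-derivatives tangent to the simplex and of order $\rho$, and your main obstacle is to show that they ``genuinely tend to zero''. They do not. Since $f_e$ depends only on the $T$'s, $\dc f_e=\sum_{e'}\partial_{T_{e'}}f_e\,\dd\tau_{e'}$ is an angle form. Moreover $\partial_{T_{e'}}f_e=O(S^{-1})$ with $S=\sum_fT_f$, so $T_e\,\dc f_e$ is of order one and survives on $\partial_H$. For two parallel edges with vertex sources $\pm P$, one has $f_1=\tfrac12\lambda_2^2|P|^2$ and
\[
 T_1\,\dc f_1=|P|^2\lambda_1\lambda_2\,(\lambda_1\,\dd\tau_2-\lambda_2\,\dd\tau_1)\neq0 .
\]
What holds is an exact algebraic cancellation, not an estimate. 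Set $\Phi(T)=\min_j\tfrac12\sum_eT_e|j_e|^2=\sum_eT_ef_e$. Stationarity of $j^0$ (the envelope theorem) gives $\partial_{T_e}\Phi=f_e$, hence $\sum_eT_e\,\partial_{T_{e'}}f_e=0$ for every $e'$, and $\dc\Phi=\sum_ef_e\,\dd\tau_e$ identically. This identity is what makes the coefficient of $\dd\tau_e$ exactly $-f_e$. ``Absorbing'' stationarity terms into the angle term, as you put it, would change that coefficient.

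The remainder also needs a sharper criterion. Your claim that ``its derivatives also vanish'' is false in general: for a nonseparating node on a genus-one curve with sources $\pm\mathbf p$ at two points, $F_h$ contains a nonzero multiple of $|\mathbf p|^2(\operatorname{Im}(z_1-z_2))^2/T$ in flat coordinates. Vanishing at the face is not sufficient either. What you need is that $\dc$ of the remainder restricts to zero on $\partial_H$.
\begin{itemize}
\item Its $\dd\tau$-components are harmless, because $\partial_{T_e}$ of a smooth function of $(S^{-1},\lambda)$ carries a factor $S^{-1}$. The same remark disposes of the $\lambda$-dependence of $\sum_vF_{g_v}(P'_v(j^0))$, which your expansion passes over silently.
\item Its $\dd T_e$-components come from $\partial_{\tau_e}F_h$. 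Modulo the normal direction, $\dd T_e=S\,\dd\lambda_e$ on the face, so these components are amplified by $S$. A remainder such as $S^{-1}g(\tau)$ would therefore produce a spurious $\dd\lambda$-term in $\beta|_{\partial_H}$. You must use that the twist enters $F_h$ only through $q_e$ and $\bar q_e$, so the dependence on $\tau_e$ is exponentially small in $T_e$.
\end{itemize}
With these two corrections, your argument yields \eqref{eq:boundary-forms} and \eqref{eq:boundary-components}.
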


\begin{lemma}\label{lem:vertex-properties}
The integrals \eqref{eq:vertices} converge absolutely and are homogeneous
polynomials of degree $2Ng$ in the conserved sources. They are independent
of the chosen cotangent-line metrics.
\end{lemma}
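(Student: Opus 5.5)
Three things have to be shown: convergence, polynomiality and homogeneity, and independence of the metrics. The plan is to treat $V_{g;\bm k}(P)$ as the integral of a smooth closed form over the compact orbifold with corners $\Mhat_{g,n}$.

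\textbf{Convergence.} By Lemma~\ref{lem:green-extension}, $\Theta_g(P)$ extends smoothly to $\Mhat_{g,n}$. The $\psi_i$ are smooth forms on $\Mbar_{g,n}$, built from the Hermitian metrics $h_i$, and so they pull back smoothly under the real blow-down $\Mhat_{g,n}\to\Mbar_{g,n}$. The integrand $\Theta_g^{Ng}\prod_i\psi_i^{k_i}/(Ng)!$ is therefore a smooth top-degree form on a compact orbifold with corners. Its integral over the interior $\cM_{g,n}$ converges absolutely, since the boundary has measure zero.

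\textbf{Polynomiality and homogeneity.} Lemma~\ref{lem:betti} says $F_h$ is quadratic in $P$. Hence $\beta=\dc F_h$ is a quadratic form in $P$ with form-valued coefficients, and so is $L_P=\frac12\sum|\mathbf p_i|^2\psi_i$. Thus $\Theta_g(P)$ is a quadratic polynomial in the conserved sources, and $\Theta_g^{Ng}$ is homogeneous of degree $2Ng$.

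Concretely, fix a basis of the conserved-source space. Expanding $\Theta_g(P)^{Ng}$ gives a finite sum of monomials in $P$, each multiplied by a fixed form. Each of these forms is a source derivative of the integrand, which extends smoothly by Lemma~\ref{lem:green-extension}. So integrating term by term is legitimate, and $V_{g;\bm k}$ is a homogeneous polynomial of degree $2Ng$ with finite coefficients.

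\textbf{Metric independence.} Lemma~\ref{lem:betti} already shows $\Theta_g$ is independent of the $h_i$, so only the $\psi_i$ can vary. Take two metrics $h_i$ and $h_i'$. Then $\psi_i'-\psi_i=\dd\dc\phi_i$, where $\phi_i=\log(\|\cdot\|_{h_i}/\|\cdot\|_{h'_i})$ is a smooth function on $\Mbar_{g,n}$. Because $\Theta_g$ and all the $\psi_j$ are closed, the difference of the two integrands is exact:
\[
\dd\Bigl(\dc\phi_i\wedge\frac{\Theta_g^{Ng}}{(Ng)!}\,\textstyle\sum_{a+b=k_i-1}\psi_i^a\psi_i'^b\prod_{j\neq i}\psi_j^{k_j}\Bigr).
\]
Stokes' theorem on $\Mhat_{g,n}$ turns the difference of the two integrals into an integral over the boundary faces $\partial_H$. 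Each face is a torus bundle over $X_H\times\Delta_H^{\res}$. The key step is to show these boundary contributions vanish, and I would try two routes.

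The first route uses that the restriction to $\partial_H$ is pulled back from $\Mbar_{g,n}$, where the boundary stratum has real codimension at least $2$. The forms $\dc\phi_i$ and $\psi_j$ are pulled back from the stratum $\prod_v\Mbar_{g_v,n_v}$ of real dimension $2\dim\Mbar_{g,n}-2E$. In $\Theta_g|_{\partial_H}$, by Proposition~\ref{prop:energy-degeneration}, the only terms with fiber directions are $\dd_\lambda b$ and $\sum\dd f_e\wedge\dd\tau_e$. Together these supply at most $2E-1$ fiber degrees, namely $E-1$ simplex directions and $E$ angle directions.

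The second route is a degree count. Each angle direction $\dd\tau_e$ comes paired with a $\lambda$-derivative $\dd f_e$. Since only $E-1$ independent $\lambda$-directions exist, at most $E-1$ of the $\dd\tau_e$ can appear. That leaves the integrand degenerate along at least one torus direction, so its fiber integral vanishes.

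\textbf{Expected main obstacle.} I expect the hardest step to be this boundary vanishing, specifically controlling the mixed terms $\dd_\lambda b$ near corners of $\Delta_H^{\res}$ and checking that no total fiber degree $2E-1$ survives. The fallback is to compare the degree of $\Theta_g^{Ng}$ restricted to the faces with the rank identity of Lemma~\ref{lem:betti} applied to the vertex sources $P'_v$. Those sources span at most $N$ dimensions, so $\Theta_{g_v}^{Ng_v+1}=0$ at each vertex. This forces a surplus of $hN$ degrees to come from the fiber, which is impossible within dimension bounds.
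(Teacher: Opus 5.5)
Your proposal is correct. It uses exactly the machinery the paper sets up for this lemma, which the paper states without proof: smooth extension to $\Mhat_{g,n}$ gives absolute convergence and justifies termwise integration; quadratic dependence of $\Theta_g$ gives polynomiality; and Stokes' theorem gives metric independence, with the boundary killed by the same node-angle integration as in the sketch of Proposition~\ref{prop:boundary-identity}. Concretely, on each open face $\dc\phi_i$ and the $\psi_j$ are horizontal, every $\dd\tau_e$ in $\Theta_g|_{\partial_H}$ is paired with $\dd f_e$, and $\dd f_0\wedge\cdots\wedge\dd f_{E-1}=0$ on the $(E-1)$-dimensional $\Delta_H^{\res}$.

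This vanishing is pointwise, so your concern about $\dd_\lambda b$ near corners is moot: corners have measure zero and all forms are smooth by Lemma~\ref{lem:green-extension}. You should drop the first route, which never reaches a contradiction. You should also drop the rank-identity fallback, which fails whenever $E\geq Nh+1$, for instance for every tree.
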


\subsection{Stokes' theorem on curve moduli}
We next show that the rank identity in Lemma \ref{lem:betti} gives a differential identity. This allows us to apply Stokes' theorem  and use
Proposition \ref{prop:energy-degeneration} to evaluate the boundary integral.

For $\sum_i\eta_i=0$, define
\[
 D_\eta=\sum_{i,a}\eta_i^a\partial_{p_i^a},\qquad
 \Lambda_\eta=\sum_i(\mathbf{p}_i\cdot\eta_i)\psi_i.
\]
We first use commuting $\eta_i$. All identities are linear in $\eta$,
so the odd source symbols of the field theory can be substituted afterwards.

\begin{samepage}
\begin{lemma}[Transgression of the rank identity]\label{lem:transgression}
For $|\bm k|=(3-N)g+n-4$,
\begin{equation}\label{eq:boundary-form}
 \dd\left((D_\eta\beta)\frac{\Theta_g^{Ng}}{(Ng)!}
                      \prod_i\psi_i^{k_i}\right)
 =-\Lambda_\eta\frac{\Theta_g^{Ng}}{(Ng)!}\prod_i\psi_i^{k_i}.
\end{equation}
\end{lemma}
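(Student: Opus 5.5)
We expand the left-hand side with the Leibniz rule and use that $\Theta_g$ and the $\psi_i$ are closed: $\dd\Theta_g=0$ by Lemma \ref{lem:betti}, and each $\psi_i$ is a closed $(1,1)$-form. Since $D_\eta\beta$ is a $1$-form and all other factors are even and closed,
\[
 \dd\left((D_\eta\beta)\frac{\Theta_g^{Ng}}{(Ng)!}\prod_i\psi_i^{k_i}\right)
 =\dd(D_\eta\beta)\,\frac{\Theta_g^{Ng}}{(Ng)!}\prod_i\psi_i^{k_i}.
\]
The operator $D_\eta$ acts only on the sources and commutes with $\dd$. Hence $\dd(D_\eta\beta)=D_\eta(\dd\beta)=D_\eta\Theta_g-D_\eta L_P$, using $\Theta_g=L_P+\dd\beta$ from \eqref{eq:theta-definition}. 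Since $L_P=\frac12\sum_i|\mathbf{p}_i|^2\psi_i$, we get $D_\eta L_P=\sum_i(\mathbf{p}_i\cdot\eta_i)\psi_i=\Lambda_\eta$. Therefore
\[
 \dd(D_\eta\beta)=D_\eta\Theta_g-\Lambda_\eta .
\]
The $-\Lambda_\eta$ term gives exactly the right-hand side of \eqref{eq:boundary-form}. It remains to show
\[
 (D_\eta\Theta_g)\,\Theta_g^{Ng}\prod_i\psi_i^{k_i}=0 .
\]

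This vanishing is the rank identity. Note $(D_\eta\Theta_g)\Theta_g^{Ng}=\frac{1}{Ng+1}D_\eta(\Theta_g^{Ng+1})$, where we use that $D_\eta$ is a derivation and that $2$-forms commute. By Lemma \ref{lem:betti}, $\Theta_g(P)^{Ng+1}=0$ identically as a function of the conserved source $P$ in the open cone where it is defined. Differentiating along the conserved direction $\eta$ (allowed since $\sum_i\eta_i=0$) therefore gives $D_\eta(\Theta_g^{Ng+1})=0$. Multiplying by $\prod_i\psi_i^{k_i}$ gives the claim. Alternatively, we can argue directly from the Betti formula \eqref{eq:betti}. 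Since $\Theta_g$ is quadratic in $P$, the polarization $\Theta_g(P+s\eta)$ is the theta form of sources lying in the span of $P$ and $\eta$. This also gives the vanishing, through the general rank bound with $\rho$ replaced appropriately. The first route, differentiating the identity $\Theta_g^{Ng+1}\equiv0$, is cleaner.

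The degree condition $|\bm k|=(3-N)g+n-4$ plays no role in the identity itself. It only ensures that both sides of \eqref{eq:boundary-form} are top-degree forms, of real degree $2(3g-3+n)$, as needed for the later Stokes argument. Signs are handled by placing $D_\eta\beta$ first, so no sign arises when $\dd$ passes through it.

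The main point requiring care is justifying $D_\eta\dd\beta=\dd D_\eta\beta$ and the differentiation of $\Theta_g^{Ng+1}=0$. These require smooth dependence of $\beta$ on the conserved sources. This holds because $F_h$ is quadratic in $P$ by Lemma \ref{lem:betti}, and Lemma \ref{lem:green-extension} controls source derivatives. Beyond that, the argument is a formal computation.
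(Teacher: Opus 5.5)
Your proof is correct and matches the paper's argument: you use $\dd(D_\eta\beta)=D_\eta\Theta_g-\Lambda_\eta$ from $\Theta_g=L_P+\dd\beta$, the closedness of $\Theta_g$ and the $\psi_i$, and differentiation of the rank identity $\Theta_g^{Ng+1}=0$ along the conserved direction $\eta$ to kill $(D_\eta\Theta_g)\Theta_g^{Ng}$. The polarization alternative is not needed, and the phrase ``open cone'' is slightly off, since $\Theta_g$ is quadratic on the whole conserved subspace; neither point affects the argument.
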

\begin{proof}
Differentiate $\Theta_g^{Ng+1}=0$ in the direction $\eta$ and use
$D_\eta\Theta_g=\Lambda_\eta+\dd(D_\eta\beta)$.
The forms $\Theta_g$ and $\psi_i$ are closed, and the lemma follows.
\end{proof}
\end{samepage}

\pagebreak[3]
Order the edges $0,\ldots,E-1$ and orient the projective simplex by
$\dd\lambda_1\cdots\dd\lambda_{E-1}$. Define the boundary flux
\begin{equation}\label{eq:flux}
 \Phi_H=(-1)^{E-1}\sum_{a=0}^{E-1}(-1)^a(D_\eta f_a)
 \dd f_0\wedge\cdots\wedge\widehat{\dd f_a}\wedge\cdots
                         \wedge\dd f_{E-1},
\end{equation}
where $\dd f_e$ in this formula differentiates only the projective lengths.
The current in \eqref{eq:graph-current} and the flux \eqref{eq:flux}
are the only graph data needed for the comparison with heat kernels.

\begin{proposition}[Boundary identity]\label{prop:boundary-identity}
For $|\bm k|=(3-N)g+n-4$,
\begin{equation}\label{eq:boundary-coefficients}
 \sum_i(\mathbf{p}_i\cdot\eta_i)V_{g;\bm k+e_i}(P)
 =\sum_{\substack{[H,(g_v)]\\E(H)=Nb_1(H)+1}}
 \frac1{|\Aut(H,(g_v))|}
 \int_{\Delta_H^{\res}}\Phi_H
           \prod_vV_{g_v;(\bm k_v,0_{\mathrm{int}})}(P'_v).
\end{equation}
The graphs are connected and stable, $g=b_1(H)+\sum_vg_v$, and the
external legs are labeled. Each node marking has descendant index zero.
\end{proposition}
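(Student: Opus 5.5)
The plan is to integrate the transgression identity of Lemma \ref{lem:transgression} over $\Mhat_{g,n}$ and evaluate the resulting boundary term with Proposition \ref{prop:energy-degeneration}.

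\emph{Interior side.} Integrating the right side of \eqref{eq:boundary-form} over $\cM_{g,n}$ and expanding $\Lambda_\eta=\sum_i(\mathbf p_i\cdot\eta_i)\psi_i$ gives
\[
-\sum_i(\mathbf p_i\cdot\eta_i)V_{g;\bm k+e_i}(P),
\]
by the definition \eqref{eq:vertices}. Absolute convergence follows from Lemma \ref{lem:vertex-properties}.

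\emph{Applying Stokes.} By Lemma \ref{lem:green-extension}, the form $\omega=(D_\eta\beta)\,\Theta_g^{Ng}/(Ng)!\,\prod_i\psi_i^{k_i}$ extends smoothly to the compact orbifold with corners $\Mhat_{g,n}$. Stokes' theorem then gives
\[
\int_{\Mhat}\dd\omega=\int_{\partial\Mhat}\omega .
\]
Only the codimension-one faces contribute, namely the open faces $\partial_H$ over the interiors of $\Delta_H^{\res}$. Corners have measure zero, and the compatibility statements of Lemma \ref{lem:green-extension} handle the nested strata. Each face $\partial_H$ is a torus bundle over $X_H\times\Delta^{\res}_H$ modulo $\Aut(H,(g_v))$. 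This quotient produces the factor $1/|\Aut|$ and the sum over decorated stable graphs $[H,(g_v)]$. I take the $\psi_i$ on the boundary to restrict to the corresponding vertex classes, with the node metrics normalized as in Proposition \ref{prop:energy-degeneration}.

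\emph{Evaluating the boundary integral.} Substitute \eqref{eq:boundary-forms}. Then $D_\eta\beta$ restricts to $D_\eta b-\sum_e(D_\eta f_e)\,\dd\tau_e$, and $\Theta_g$ restricts to $\sum_v\Theta_{g_v}+\dd_\lambda b-\sum_e\dd f_e\wedge\dd\tau_e$. Integrating over the node-angle torus $\prod_e\dd\tau_e$ requires exactly one $\dd\tau_e$ for each edge. These must come from one factor $D_\eta f_a\,\dd\tau_a$ in $D_\eta\beta$ together with $E-1$ factors $\dd f_e\wedge\dd\tau_e$ taken from $\Theta_g^{Ng}$. This pins the $\lambda$-dependence to the $(E-1)$-form $\Phi_H$ of \eqref{eq:flux}, with the signs from reordering $\dd\tau$'s and the multinomial count combining with $1/(Ng)!$.

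The remaining power is $\Theta^{Ng-E+1}$. Terms containing $D_\eta b$ or $\dd_\lambda b$ carry no $\dd\tau$ and are either killed by the torus count or exceed the dimension of the simplex once $\Phi_H$ saturates the $\lambda$-directions. The surviving form is therefore $\Phi_H\cdot(\sum_v\Theta_{g_v})^{Ng-E+1}/(Ng-E+1)!$, restricted to $TX_H$ as in \eqref{eq:boundary-components}.

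Now use the rank identity of Lemma \ref{lem:betti} on each vertex, $\Theta_{g_v}^{Ng_v+1}=0$. The remaining power then vanishes unless $Ng-E+1=\sum_v Ng_v$, that is, $E=Nb_1(H)+1$. In that case the multinomial expansion yields $\prod_v\Theta_{g_v}^{Ng_v}/(Ng_v)!$. The descendants $\psi_i$ distribute over the vertices carrying the external legs, while node markings carry index $0$. Integrating over $X_H$ gives $\prod_vV_{g_v;(\bm k_v,0)}(P'_v(j^0))$, which depends on $\lambda$ through $j^0$, leaving $\int_{\Delta^{\res}_H}\Phi_H\prod_v V$. Degree mismatch at a vertex gives zero automatically. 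Combining with the interior side and fixing the overall orientation sign (the outward normal versus the orientation $\dd\lambda_1\cdots\dd\lambda_{E-1}$, and the $(-1)^{E-1}$ in $\Phi_H$) yields \eqref{eq:boundary-coefficients}.

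\emph{Main obstacle.} I expect the hardest part to be the sign and orientation bookkeeping, and the justification that only the top $\dd\tau$-saturating terms survive. One must check that the $\dd_\lambda b$ contributions vanish: they either exceed the dimension $E-1$ of $\Delta_H$ or integrate to boundary terms of $\Delta^{\res}_H$ that cancel between adjacent faces. I would first verify the one-edge case $E=1$, where $\Phi_H=D_\eta f_0$, and then handle the general case by induction over corners.
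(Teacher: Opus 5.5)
\textbf{Gap: the selection $E=Nb_1(H)+1$ is only half justified.} Your overall route is the paper's: Stokes' theorem for the transgression form on $\Mhat_{g,n}$, the boundary values of Proposition~\ref{prop:energy-degeneration}, and the node-angle integral forcing one factor $D_\eta f_a\,\dd\tau_a$ and $E-1$ factors $\dd f_e\wedge\dd\tau_e$. The $D_\eta b$ and $\dd_\lambda b$ terms then drop out by counting directions on the $(E-1)$-dimensional simplex. The problem is how you select $E=Nb_1(H)+1$. The vertex rank identity $\Theta_{g_v}^{Ng_v+1}=0$ kills $(\sum_v\Theta_{g_v})^{Ng-E+1}$ only when $Ng-E+1>\sum_vNg_v=N(g-h)$, that is, when $E\le Nh$.

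For $E\ge Nh+2$ the remaining power lies strictly below the rank bound, so the identity says nothing. Since $\sum_v m_v=Ng-E+1<\sum_v Ng_v$, some vertex carries $m_v<Ng_v$ theta factors and correspondingly more $\psi$-degree. The resulting vertex integrals are not of the form $V_{g_v;(\bm k_v,0)}$ and are not zero in general; for $m_v=0$ they reduce to $\psi$-intersection numbers. Your claim that the power ``vanishes unless $Ng-E+1=\sum_vNg_v$'' is therefore unjustified in this range. A tree $H$ with $E\ge2$ edges is already such a case: your argument gives no vanishing, yet these graphs are excluded from \eqref{eq:boundary-coefficients}.

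\textbf{Missing idea: $\Phi_H$ itself vanishes when $E-1>Nh$.} The minimizer $j^0(\lambda)$ always satisfies the Kirchhoff constraints in \eqref{eq:graph-current}. So it moves in the affine space of conserved currents, a translate of $Z_1(H)\otimes\R^N$ with $Z_1(H)$ the cycle space, which has dimension $Nh$. Each $f_e=|j^0_e|^2/2$ factors through this space. Hence the $E-1$ one-forms $\dd f_e$ in \eqref{eq:flux} are pulled back from an $Nh$-dimensional space, and their wedge product is zero once $E-1>Nh$. For a tree, $j^0$ is constant and $\Phi_H=0$ unless $E=1$.

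Equivalently, $-\sum_e\dd f_e\wedge\dd\tau_e$ has rank at most $Nh$. The node-angle integral needs its $(E-1)$-st power, so it forces $E\le Nh+1$; this is what the paper means by saying the node-angle integration selects $E=Nh+1$. Combined with your vertex rank bound $E\ge Nh+1$, you get equality. From there your multinomial step producing $\prod_v\Theta_{g_v}^{Ng_v}/(Ng_v)!$, and hence the product of vertex coefficients, goes through as you wrote it.
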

\begin{proof}[Proof sketch]
By Lemma \ref{lem:green-extension}, Stokes' theorem applies to
\eqref{eq:boundary-form} and gives
\[
 -\int_{\Mhat_{g,n}}\Lambda_\eta
       \frac{\Theta_g^{Ng}}{(Ng)!}\prod_i\psi_i^{k_i}
 =\int_{\partial\Mhat_{g,n}}(D_\eta\beta)
       \frac{\Theta_g^{Ng}}{(Ng)!}\prod_i\psi_i^{k_i}.
\]
Use Proposition \ref{prop:energy-degeneration} to evaluate the right-hand
side. Integration over the node angles selects $E=Nh+1$ and removes
$D_\eta b$. With the boundary orientation, the surviving flux is \eqref{eq:flux}. The component integrals give the
product of vertex coefficients, and the orbifold gluing degree gives
the automorphism factor in \eqref{eq:boundary-coefficients}.
\end{proof}

\subsection{Comparison with the heat-kernel integrals}
The short-time collapse of a heat-kernel graph is governed by the same
minimizing current $j^0$ and flux $\Phi_H$ as the degeneration of a
marked curve. The comparison below identifies its local boundary
contribution with the corresponding term in \eqref{eq:boundary-coefficients}.

Fix a connected graph $H$ with ordered edges and polynomial holomorphic
vertex symbols $F_v$. Following  \cite{wang}, we place the extended kernel
\[
 \widehat K_{t_e}=K_{t_e}-\dd t_e\,R_{t_e}
\]
on every edge and apply the vertex operators. Contract using the trace
pairing and integrate all vertex positions except one output point.
Denote the heat-time degree $E-1$ part by $\omega_H(t;F_v)$: each term
has one $K$ edge and $E-1$ propagator terms, with signs determined by
the ordered product.

For a fixed upper cutoff $L$, we use the compactified Schwinger parameter space (see \cite[Section~3.1]{wang} and \cite{renorm}). Its added boundary
faces record the relative rates at which edge parameters tend
to zero. Near the boundary face where all edge parameters vanish,
write $t_e=\rho\lambda_e$, with $\sum_e\lambda_e=1$, and let
$\iota_H$ denote the face $\rho=0$.
The local short-time boundary operator is
\begin{equation}\label{eq:heat-boundary-definition}
 \cO_H(F_v)=\int_{\Delta_H^{\res}}\iota_H^*\omega_H(t;F_v).
\end{equation}
This is a local operator before taking the output trace or dividing by
the graph automorphism factor.
The existence and locality of the boundary restriction follows from 
\cite[Theorems 5--6 and Propositions 9, 11]{wang}.

The reduced symbol replaces derivatives on the inputs by their even and
odd momenta, subject to total momentum conservation. The odd source
direction is $\eta$, as in \eqref{eq:flux}.

\begin{proposition}[Heat-kernel and curve-boundary comparison]
\label{prop:two-pushforwards}
The operator \eqref{eq:heat-boundary-definition} vanishes unless
$E=Nh+1$. At this edge number its reduced symbol is
\begin{equation}\label{eq:heat-operation}
 \Symb^{\mathrm{red}}\cO_H(F_v)
   =\int_{\Delta_H^{\res}}\Phi_H\prod_vF_v(P'_v).
\end{equation}
Here $P'_v$ is determined by the same minimizing current
\eqref{eq:graph-current} as in the curve degeneration. The integral converges on the resolved simplex.
The identity also holds for subgraph insertions and successive collapses.
\end{proposition}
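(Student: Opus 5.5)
We would prove Proposition \ref{prop:two-pushforwards} by computing the heat-kernel graph integral in momentum space, evaluating its short-time limit by Gaussian (Laplace) asymptotics in the Schwinger parameters, and matching the result with the flux of \eqref{eq:flux}.

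\textbf{Step 1 (Gaussian integral over vertex positions).} Fourier transform in the holomorphic directions. Each vertex symbol $F_v$ then becomes a polynomial in the momenta of its half-edges and external legs. Integrating all vertex positions except the output point gives a Gaussian integral whose quadratic form is the weighted graph Laplacian with edge weights $t_e^{-1}$. Completing the square writes the integrand as
\[
 \exp\Bigl(-\sum_e t_e|j_e|^2\Bigr)\times(\text{polynomial}).
\]
Here the exponent is minimized exactly by the current \eqref{eq:graph-current} with $\lambda_e$ replaced by $t_e$. The minimizer is homogeneous of degree zero in $t$, so it depends only on $\lambda=t/\rho$, and the edge momenta become $j^0_e(\lambda)$ plus fluctuations. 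The fluctuations live in the cycle space, which has dimension $Nh$ per real momentum direction. Integrating them out produces the determinant factor $\rho^{-Nh}$ and Wick contractions.

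\textbf{Step 2 (edge operators and the $\rho$-power count).} The operators $D=\sum_a\partial_{z^a}\partial_{\theta_a}$ in $K_t$ and $Dq_0$ in $R_t$ replace the odd variables $\theta,\upsilon$ by momentum factors. After the $\bar z$-Gaussian, each propagator term $\dd t_e\,R_{t_e}$ contributes $\dd t_e$ times a factor linear in $j_e$ and in the odd direction. The degree-$(E-1)$ heat-time part, written in the coordinates $(\rho,\lambda)$, has the form
\[
 \rho^{E-1-Nh}\,\dd\lambda_1\cdots\dd\lambda_{E-1}\times(\cdots)\;+\;(\dd\rho\text{-terms}).
\]
The restriction $\iota_H^*$ to the face $\rho=0$ kills the $\dd\rho$-terms. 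The remaining term has a finite nonzero limit only when $E-1=Nh$. For $E-1>Nh$ it vanishes, and for $E-1<Nh$ it cannot occur for polynomial symbols after the Berezin integration, since there are too few odd variables to saturate. This gives the vanishing statement.

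\textbf{Step 3 (identifying the surviving form).} At $E=Nh+1$, the fluctuation contractions drop out at $\rho=0$, and every momentum is evaluated at $j^0(\lambda)$. Hence each vertex contributes $F_v(P'_v)$, with the node sources assigned as in the curve degeneration. The product over edges of $\partial_{t_e}$ of the exponent $-t_e|j_e|^2$, taken along $t_e=\rho\lambda_e$, yields the one-forms $\dd f_e$ on the simplex. The single $K$ edge, carrying the odd direction $\eta$, yields $D_\eta f_a$. The alternating sum over which edge carries $K$ reproduces \eqref{eq:flux}, including the sign $(-1)^{E-1}$ coming from the ordered product with heat-time forms placed first. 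Finally, the vertex symbols $F_v$ enter only through their values on $j^0$, which gives \eqref{eq:heat-operation}.

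\textbf{Step 4 (convergence and successive collapses).} Convergence on $\Delta_H^{\res}$ follows because, by Lemma \ref{lem:green-extension}, $j^0$ and $f_e$ are smooth on the resolved simplex. Combined with the boundary estimates of \cite{wang}, this makes the integrand bounded. For subgraph insertions and successive collapses, apply the same computation face by face: near a corner, $j^0$ factorizes into the contracted subgraph's current and the quotient graph's current. This parallels the corner statement in Proposition \ref{prop:energy-degeneration}.

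The main obstacle is Step 3. The difficulty is not the leading Gaussian term but the sign and normalization bookkeeping: checking that the constants $c_N$, the factors of $\pi$, the odd-variable orderings and the determinant normalization combine to exactly $\Phi_H$, with no stray constant. We would fix all conventions first on a single self-loop with $N=1$, $h=1$, $E=2$, which is the smallest case with $E=Nh+1$, and compare it with the $g=1$ boundary term of Proposition \ref{prop:boundary-identity}. The general case then follows by the multiplicativity of the Gaussian integral over edges.
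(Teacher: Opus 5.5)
The note states this proposition without proof; it only cites \cite{wang} for the existence and locality of the boundary restriction. So your argument has to stand on its own. Your framework is reasonable: momentum space, the current minimizing $\sum_e t_e|j_e|^2$, a $\rho$-power count and an odd-variable count. But two load-bearing steps are wrong.

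First, the loop fluctuations do not ``drop out at $\rho=0$.'' Let $L_t=C^{T}\mathrm{diag}(t_e)C$ be the cycle-space Laplacian, with $C$ the $E\times h$ cycle matrix. The fluctuation covariance is $L_t^{-1}=\rho^{-1}L_\lambda^{-1}$. Wick contractions would therefore be \emph{more} singular as $\rho\to0$, and they would spoil your count $\rho^{E-1-Nh}$. They vanish identically for a different reason. All numerators coming from $D$, $q_0$ and $F_v$ are polynomials in holomorphic momenta only, while the Gaussian is Hermitian. Hence every holomorphic moment $\langle\ell^{a_1}\cdots\ell^{a_m}\rangle$ with $m>0$ is zero. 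This holomorphy is what justifies both your power count and the evaluation at $j^0$. It also explains why only the bilinear $f_e=\tfrac12\sum_a(j_e^{0,a})^2$ can appear, and never the Hermitian exponent.

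Second, Step 3 misidentifies where $\Phi_H$ comes from.
\begin{itemize}
\item The $\dd t_e$ come only from $-\dd t_e R_{t_e}$. No $t$-derivative of the exponent is ever taken, and the exponent tends to $1$ on $\rho=0$. A product of such derivatives would in any case be a function, not the one-forms $\dd f_e$.
\item $R_t=Dq_0E_t$ is quadratic, not linear, in $j_e$, with one $\theta$-derivative and one $\upsilon$-derivative.
\item Every edge carries a $D$, not only the $K$ edge.
\end{itemize}

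What actually happens: Fourier transform the odd variables as well. Set $M_{e,(\mu,b)}=C_{e\mu}j^{0,b}_e$, and let $M^{(a)}$ denote $M$ with row $a$ deleted.
\begin{itemize}
\item The Berezin integral over the $Nh$ odd $\upsilon$-loop variables of the $E-1$ factors from $q_0$ gives $\pm\det M^{(a)}$, with $a$ the $K$ edge. It is zero if $E-1<Nh$.
\item The Berezin integral over the $\theta$-loop variables of the $E$ factors from $D$ gives $\sum_{a'}\pm(D_\eta f_{a'})\det M^{(a')}$.
\end{itemize}
The comparison then rests on
\[
 \partial_{\lambda_{e'}}f_e=-\bigl[M(L_\lambda^{-1}\otimes 1_N)M^{T}\bigr]_{ee'}.
\]
Via Cauchy--Binet this gives
\[
 \bigwedge_{e\neq a'}\dd f_e=(-1)^{Nh}(\det L_\lambda)^{-N}\det M^{(a')}\sum_a\pm\det M^{(a)}\bigwedge_{e\neq a}\dd\lambda_e,
\]
where $(\det L_\lambda)^{-N}$ is exactly the loop Gaussian.

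So the index in \eqref{eq:flux} is not the $K$ edge, and the terms do not match one by one. Take $N=1$ and two vertices joined by two edges. This is the graph your test case should be, since a single self-loop has $E=1$. The term with $K$ on edge $0$ is proportional to $\lambda_0^2\lambda_1\,\dd\lambda_1$, whereas the $a=0$ term of $\Phi_H$ is proportional to $\lambda_0\lambda_1^2\,\dd\lambda_1$. Only the sums agree.

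This determinant identity is the real content of the proposition, not sign bookkeeping. It cannot follow from one example plus ``multiplicativity over edges,'' because $L_\lambda$ and $M$ couple all the edges.
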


\subsection{Stokes' theorem and the QME}
The first Stokes argument and the heat-kernel comparison give local
cancellation. The second Stokes argument, on the heat-time parameters,
turns this cancellation into the QME at every positive scale.

\begin{proposition}[Local cancellation]\label{prop:flat-master}
For the interaction \eqref{eq:flat-interaction}, $Q_0I$ is cancelled by
the sum of the local short-time boundary contributions
$\cO_H$, with their genus powers and graph automorphism factors.
\end{proposition}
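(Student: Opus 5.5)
The plan is to compute $Q_0I$ symbolically and match it term by term with the right-hand side of Proposition~\ref{prop:boundary-identity}, using Proposition~\ref{prop:two-pushforwards} as the dictionary.

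\textbf{Step 1: reduce $Q_0 I$ to a symbol identity.} Since $\Tr\circ\bdel=0$ on compactly supported inputs, only the descendant part $-\zeta D$ of $Q_0$ contributes. Acting on the $i$th input, $-\zeta D$ raises the descendant index $k_i\mapsto k_i+1$ and applies $D=\sum_a\partial_{z^a}\partial_{\theta_a}$. In reduced symbols this is multiplication by $\mathbf p_i\cdot\eta_i$. Here $\eta_i$ is the odd source symbol of $\partial_{\theta_a}$ on the $i$th slot. Conservation $\sum_i\eta_i=0$ holds after integration by parts, because $\sum_i\partial_{\theta_a}^{(i)}$ is a total odd derivation that is killed by the Berezin trace.

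Thus the genus-$g$, arity-$n$ part of $Q_0I$ has reduced symbol
\[
 \sum_{|\bm k|=(3-N)g+n-4}\ \sum_i(\mathbf p_i\cdot\eta_i)\,V_{g;\bm k+e_i}(P),
\]
up to a global sign fixed by the grading conventions. I will check this sign carefully with the left-derivative convention. The computation is justified because all the identities are linear in $\eta$, as remarked before Lemma~\ref{lem:transgression}.

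\textbf{Step 2: rewrite as boundary graph sums.} By Proposition~\ref{prop:boundary-identity}, the symbol above equals a sum over stable graphs $H$ with $E(H)=Nb_1(H)+1$. Each graph contributes
\[
 \frac{1}{|\Aut(H,(g_v))|}\int_{\Delta_H^{\res}}\Phi_H\prod_vV_{g_v;(\bm k_v,0)}(P'_v).
\]
Now take $F_v$ to be the vertex symbol $V_{g_v;(\bm k_v,0)}$ of $I$, restricted to the half-edges. Proposition~\ref{prop:two-pushforwards} then identifies this integral with $\Symb^{\mathrm{red}}\cO_H(F_v)$.

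Graphs with $E\neq Nh+1$ give $\cO_H=0$ by the same proposition, so summing over all graphs introduces no extra terms. The power of $\h$ matches: the total genus is $b_1(H)+\sum_v g_v=g$. The node markings carry descendant index zero, which agrees with the fact that propagators contract only primary slots.

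\textbf{Step 3: combinatorics of labels and symmetry factors.} I would match the factor $1/n!$ in $I$ against the labeled external legs and the factors $1/n_v!$ of each vertex. The orbit-stabilizer count then converts labeled graphs with $|\Aut(H,(g_v))|$ into the Feynman weights that multiply $\cO_H$. The required statement is $Q_0I+\sum_H\h^{g}\,|\Aut|^{-1}\Tr\,\cO_H=0$, and each term is determined by its reduced symbol. Since a local functional is determined by its reduced symbol modulo total derivatives, equality of symbols gives the proposition.

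\textbf{Main obstacle.} The delicate point is consistency of signs and orientations. Three conventions must agree:
\begin{itemize}
 \item the ordered odd variables in $\widehat K_{t_e}=K_{t_e}-\dd t_e R_{t_e}$, with exactly one $K$ edge placed against the $(-1)^a$ in $\Phi_H$;
 \item the boundary orientation of $\Mhat_{g,n}$ used in the Stokes step;
 \item the sign coming from $-\zeta D$ together with the trace normalization $c_N$.
\end{itemize}
I would first fix these signs on the simplest graph, a single genus-zero edge joining two trees (the classical case $N$ arbitrary, $h=0$, $E=1$). There the identity reduces to the classical master equation for $I_0$ from Lemma~\ref{lem:classical}. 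I would then argue that the remaining signs are determined multiplicatively by the ordered-product conventions, which are shared by both Propositions~\ref{prop:boundary-identity} and~\ref{prop:two-pushforwards}.
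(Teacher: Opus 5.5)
Your proposal is correct and follows the paper's proof sketch essentially step for step: the $\bdel$ part integrates to zero, $-\zeta D$ has reduced symbol $-\sum_i(\mathbf{p}_i\cdot\eta_i)V_{g;\bm k+e_i}$, Proposition~\ref{prop:boundary-identity} rewrites its negative as the curve-boundary graph sum, Proposition~\ref{prop:two-pushforwards} identifies each term with $\cO_H$, and integration by parts passes from reduced symbols to integrated local functionals. The paper records three points you leave implicit: the overall sign you leave open is just the explicit minus in $-\zeta D$, since the two propositions already fix their own orientations and no classical calibration is needed; the same comparison covers subgraph insertions; and zero-arity graphs contribute nothing because the source flux vanishes there.
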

\begin{proof}[Proof sketch]
The integrated $\bdel$ part of a holomorphic vertex is zero, while
$-\zeta D$ has reduced symbol
\[
 \Symb^{\mathrm{red}}(Q_0I)_{g,\bm k}
       =-\sum_i(\mathbf{p}_i\cdot\eta_i)V_{g;\bm k+e_i}.
\]
Proposition \ref{prop:boundary-identity} expresses its negative as a sum
of curve-boundary terms, and Proposition \ref{prop:two-pushforwards}
identifies each term with the corresponding heat-kernel boundary
operator. Integration by parts turns equality of reduced symbols into
equality of integrated local functionals. The insertion statement follows
from the same comparison; at zero arity, the source flux vanishes.
\end{proof}

\begin{theorem}[Flat quantization]\label{thm:flat}
The coefficients \eqref{eq:vertices} define a local interaction on the
full descendant field complex. Its effective interaction exists
coefficientwise and satisfies
\begin{equation}\label{eq:flat-QME}
 I[L]=W(P_{0,L},I),\qquad
 Q_0I[L]+\tfrac12\{I[L],I[L]\}_L+\h\Delta_LI[L]=0,
 \quad L>0.
\end{equation}
The statement includes the vacuum coefficient.
\end{theorem}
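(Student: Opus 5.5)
The proof is the second Stokes argument, on Schwinger parameter space, in the standard form of Costello's effective BV formalism. By Lemma \ref{lem:vertex-properties}, each $V_{g;\bm k}$ is a homogeneous polynomial of degree $2Ng$ on conserved sources. Substituting $\partial_{z_i}$ therefore gives a well-defined local functional with finitely many terms at fixed $(g,n,\bm k)$; the choice of polynomial representative affects only total derivatives. This settles the first assertion. For existence of $I[L]=W(P_{0,L},I)$ we argue coefficientwise. At fixed genus, arity and descendant degree only finitely many graphs contribute, since vertices carry $\h^{g_v}$ and the degree condition ties $|\bm k|$ to $n$. For each graph we invoke the convergence results of \cite{wang} (Theorems 5--6) for holomorphic vertices with propagator $P_{\epsilon,L}$. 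These give that the $\epsilon\to0$ limit exists without counterterms, including for vacuum graphs. We write $I[\epsilon,L]=W(P_{\epsilon,L},I)$ for the regularized version.

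Next we reduce the QME at scale $L$ to a statement about the limit $\epsilon\to 0$. From Lemma \ref{lem:flat-homotopy}, $[Q_0,\partial_{P_{\epsilon,L}}]=\Delta_\epsilon-\Delta_L$, and the usual conjugation identity gives
\[
 (Q_0+\h\Delta_L)e^{I[\epsilon,L]/\h}=e^{\h\partial_{P_{\epsilon,L}}}(Q_0+\h\Delta_\epsilon)e^{I/\h}.
\]
So the QME at $L$ is equivalent to the vanishing, as $\epsilon\to0$, of
\[
 e^{\h\partial_{P_{\epsilon,L}}}\bigl(Q_0I+\tfrac12\{I,I\}_\epsilon+\h\Delta_\epsilon I\bigr)e^{I/\h}.
\]
We rewrite this graphwise. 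For each connected graph $\Gamma$, place the extended kernel $\widehat K_{t_e}$ on every edge. By the second identity of Lemma \ref{lem:flat-homotopy}, the total form is closed under $[Q_0,-]+\dd_t$. Applying Stokes on the compactified Schwinger space $[\epsilon,L]^{E}$ (as in \cite[Section 3.1]{wang}) expresses the $Q_0$-variation of the graph integral through its boundary faces. The faces $t_e=L$ produce the $\Delta_L$ and $\{\,,\}_L$ terms. The faces $t_e=\epsilon$ produce $\Delta_\epsilon$ and $\{\,,\}_\epsilon$. As $\epsilon\to0$, these faces, together with the corner faces where a subgraph $H$ collapses at a common rate, reorganize into insertions of the local operators $\cO_H$ of \eqref{eq:heat-boundary-definition}.

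The key input is then Proposition \ref{prop:flat-master}. The sum of $Q_0I$ and all the local collapse contributions $\cO_H$, weighted by $\h^{b_1(H)+\sum g_v}/|\Aut|$, vanishes. Proposition \ref{prop:two-pushforwards} states that the comparison holds for subgraph insertions and successive collapses. Using this, the vanishing propagates to every graph in which the collapsed subgraph sits inside a larger graph with edges at scale up to $L$. Hence the $\epsilon\to0$ limit of the displayed expression is zero, which is the QME. The vacuum coefficient is included because zero-arity collapses have vanishing source flux, and the same Stokes argument applies to closed graphs.

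I expect the main obstacle to be the bookkeeping of the Schwinger-space boundary. We must show that in the limit $\epsilon\to0$ only two kinds of faces survive: the face $t_e=L$, and the total-collapse faces of connected subgraphs, which contribute exactly $\cO_H$. Partial collapses at different relative rates must be absorbed consistently, which is where the resolved corners and the nested compatibility in Lemma \ref{lem:green-extension} and Proposition \ref{prop:two-pushforwards} enter. The signs and automorphism factors must also match those in \eqref{eq:boundary-coefficients}. My first approach would be to induct on the number of edges. Each corner is handled as an iterated collapse, and the iterated comparison of Proposition \ref{prop:two-pushforwards} is used to identify it with a subgraph insertion whose contribution is already cancelled at lower order.
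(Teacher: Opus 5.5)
Your proposal is correct and follows essentially the same route as the paper's proof. You use well-definedness from Lemma \ref{lem:vertex-properties}, coefficientwise removal of the ultraviolet cutoff by \cite{wang}, and Stokes' theorem on Schwinger parameter space via Lemma \ref{lem:flat-homotopy}; the scale-$L$ faces give $\{-,-\}_L$ and $\Delta_L$, and the short-time collapse faces $\cO_H$ cancel $Q_0I$ by Proposition \ref{prop:flat-master}. The only difference is expository: your conjugation identity already encodes the $t_e=L$ and $t_e=\epsilon$ faces, so rerunning Stokes on $[\epsilon,L]^E$ repeats it, and the real content is the $\epsilon\to0$ identification of the $\epsilon$-faces with the collapse faces of the compactified parameter space, exactly as in the paper.
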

\begin{proof}[Proof sketch]
The vertices are well-defined by Lemma \ref{lem:vertex-properties}, and
the ultraviolet cutoff can be removed coefficientwise by \cite{wang}.
The homotopy identity in Lemma \ref{lem:flat-homotopy} converts the action
of $Q_0$ on the internal edges into a heat-time differential. Apply
Stokes' theorem to the heat-time parameters of each connected graph.
At the short-time boundary, collapsing subgraphs cancel the action of
$Q_0$ on the vertices by Proposition \ref{prop:flat-master}. At scale
$L$, cutting an edge gives the BV bracket if the graph disconnects and
the BV Laplacian otherwise. These boundary terms give precisely
\eqref{eq:flat-QME}.
\end{proof}

\section{Quantization on a Calabi--Yau manifold}\label{sec:curved}\label{sec:CE}

Let $X$ be a closed connected Calabi--Yau manifold of complex dimension
$N$, with holomorphic volume form $\Omega$ and a fixed
K\"ahler metric $g$ satisfying
\begin{equation}\label{eq:volume-metric}
 \frac{\omega^N}{N!}=c\,\Omega\wedge\overline{\Omega},
 \qquad c\text{ constant}.
\end{equation}
Condition \eqref{eq:volume-metric} makes $g$ Ricci-flat, hence real analytic
in holomorphic coordinates \cite[Theorem 6.1]{deturck-kazdan}.
Normalize so that $\det(g_{i\bar j})=1$ in holomorphic volume
coordinates. The Chern connection $\nabla$ is torsion-free and preserves
$\Omega$. The field complex and differential are
\begin{equation}\label{eq:curved-fields}
 \cE_X=\PV(X)[[\zeta]][2],\qquad
 \PV^{i,j}(X)=\Omega^{0,j}(X,\wedge^iT^{1,0}X),\qquad
 Q=\bdel-\zeta D,\quad D=\partial_\Omega.
\end{equation}
The trace is the global version of \eqref{eq:trace} in volume charts. See \cite{costello-li} for further details. 

To define the interaction, let $\mathfrak g$ be the Lie algebra
of formal volume-preserving vector fields on
$\widehat{\mathbb C}^N$ with its standard volume form.
We use the Chevalley--Eilenberg (CE) complex of $\mathfrak g$;
see \cite[Chapter~1, \S3]{fuks}.
The universal CE background $B$ is the tautological
$\mathfrak g$-valued one-cochain, defined by
\[
 B(v)=v,\qquad v\in\mathfrak g.
\]
We regard its values as polyvector fields of degree one,
with the odd CE coefficients placed before the odd polyvector
generators. Thus $B$ is even in the total grading and satisfies
$DB=0$. Similarly, we write $A_k$ for the tautological field coordinate
of descendant degree $k$: for a field
$\alpha=\sum_{k\geq0}\zeta^k\alpha_k$, we have
\[
 A_k(\alpha)=\alpha_k.
\]
Thus $A(\zeta)=\sum_{k\geq0}\zeta^kA_k$ denotes the universal field. From the flat interaction $I=I_0+I_+$, define

\begin{equation}\label{eq:RB}
 \cR_BA_k=\delta_{k0}B-BA_{k+1},\qquad \cR_BB=0,
 \qquad e^{\cR_B}A(\zeta)=B+[e^{-B/\zeta}A(\zeta)]_{\geq0}.
\end{equation}
Here, products such as $BA_{k+1}$ are taken using the wedge product of
polyvector fields, $[\,\cdot\,]_{\geq0}$ retains the nonnegative powers of $\zeta$.
Prolonging $\cR_B$ to holomorphic jets gives the equivariant interaction
\begin{equation}\label{eq:lift}
 \widehat I_B=I_0+e^{\cR_B}I_+.
\end{equation}
Its positive-genus part has no term without ordinary inputs. Choose
its representative $\Nor$ by eliminating an ordinary output momentum
and averaging over the output choices. For $n$ ordinary inputs of
descendant degrees $\bm\ell$, use weights $\ell_r/|\bm\ell|$ when
$|\bm\ell|>0$, and $1/n$ otherwise. In this representative, every
background input carries at least two holomorphic derivatives.

With the one-form coefficient placed first, set
\begin{equation}\label{eq:curvature}
 \mathcal R_{ij}=-(\bdel\Gamma^k_{ij})\theta_k.
\end{equation}
Replace field derivatives by covariant derivatives and background jets
by curvature jets according to
\begin{equation}\label{eq:evaluation}
 \begin{aligned}
 \partial_{i_1}\cdots\partial_{i_r}A_k
   &\longmapsto\nabla_{i_1}\cdots\nabla_{i_r}A_k,\\
 \partial_{i_1}\cdots\partial_{i_r}B
   &\longmapsto\nabla_{i_1}\cdots\nabla_{i_{r-2}}
                        \mathcal R_{i_{r-1}i_r},\qquad r\geq2.
 \end{aligned}
\end{equation}
Write $\Ev_{\nabla,\mathcal R}$ for this replacement. Covariant
derivatives act on all tensor indices; contractions use the parallel
volume tensors, and graded multiplication precedes the trace.
The resulting interaction on $X$ is
\begin{equation}\label{eq:curved-interaction}
 \boxed{\quad
 \cJ_X=I_0+\Ev_{\nabla,\mathcal R}\Nor(e^{\cR_B}I_+).
 \quad}
\end{equation}

\begin{theorem}[Quantization on a Calabi--Yau manifold]\label{thm:curved}
The interaction \eqref{eq:curved-interaction} defines a perturbative
quantization of BCOV theory on $X$: at every scale $L>0$, its effective
interaction $I_X[L]$ satisfies the quantum master equation on the full
descendant field complex, at all loop orders and including the vacuum
coefficient. The ultraviolet cutoff in its heat-kernel Feynman expansion
can be removed coefficientwise by the convergence results of
Wang--Yan \cite{wangyan,renorm}.
\end{theorem}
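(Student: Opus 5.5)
The plan is to reduce the curved statement to the flat one, Theorem \ref{thm:flat}, by a Gelfand--Kazhdan/Fedosov-type descent, with the universal CE background $B$ organizing the descent. In the first step I would prove a \emph{relative} flat quantization: on the CE complex of $\mathfrak g$ with coefficients in functionals on $\widehat{\C}^N$, the lifted interaction $\widehat I_B=I_0+e^{\cR_B}I_+$ satisfies the equivariant QME
\[
 (Q_0+\dd_{CE})W(P,\widehat I_B)+\tfrac12\{W,W\}_L+\h\Delta_LW=0 .
\]
The mechanism is that $e^{\cR_B}$ implements the change of variables $A(\zeta)\mapsto B+[e^{-B/\zeta}A(\zeta)]_{\geq0}$. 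This change intertwines $Q_0+\dd_{CE}+\{\frac{B}{\zeta}\text{-twist}\}$ with $Q_0$, because $DB=0$ and $\dd_{CE}B+\tfrac12[B,B]=0$, the Maurer--Cartan equation of the tautological cochain. Given this, the local cancellation of Proposition \ref{prop:flat-master} transports along $e^{\cR_B}$ once one checks two things: $e^{\cR_B}$ commutes with the short-time boundary operators $\cO_H$, and the genus-zero part $I_0$ already absorbs the classical $B$-coupling. The second follows from the classical descendant string/dilaton-type identity for $V_{0;\bm k}$ in Lemma \ref{lem:classical}.

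Commuting $e^{\cR_B}$ with $\cO_H$ is plausible because internal edges carry only primary slots, while $\cR_B$ acts on vertex inputs by multiplication with a holomorphic jet. The comparison Proposition \ref{prop:two-pushforwards} holds for arbitrary polynomial vertex symbols $F_v$, including those with $B$-jet inputs carrying zero momentum. The Stokes argument on Schwinger space in the proof of Theorem \ref{thm:flat} then goes through verbatim, with $\dd_{CE}$ added.

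The second step checks that the normalization $\Nor$ changes $\widehat I_B$ only by total holomorphic derivatives, so it preserves the integrated local functional. In this representative every $B$ carries at least two derivatives, which makes $\Ev_{\nabla,\mathcal R}$ well-defined through \eqref{eq:evaluation}. The third step is globalization. The Chern connection is torsion-free, preserves $\Omega$, and is real analytic, so the holomorphic exponential map gives a flat Grothendieck connection on the bundle of formal volume-preserving jets. Its connection one-form is a $\mathfrak g$-valued Maurer--Cartan element $\gamma$ on $X$, whose nonlinear part is governed by $\mathcal R_{ij}=-(\bdel\Gamma^k_{ij})\theta_k$ and its covariant derivatives. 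Pulling back the CE cochains along $\gamma$ is exactly $\Ev_{\nabla,\mathcal R}$, and it turns $\dd_{CE}$ into the $\bdel$ of $X$. The equivariant QME from the first step therefore becomes a local statement on $X$ with flat propagators.

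The last step compares the flat-jet propagators with the global heat kernel $P_0^L$. I would use the locality of the short-time boundary contributions, together with the UV-finiteness and convergence results \cite{wangyan,renorm}. These say that at the $\rho=0$ faces only the symbol of the heat kernel enters, and this symbol is computed in jets by the flat formulas up to curvature corrections. Those corrections are already encoded by $B$-insertions. The positive-$L$ Stokes argument then produces the global QME, including the vacuum term.

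I expect the main obstacle to be exactly this identification: the short-time asymptotics of the curved heat-kernel Feynman graphs must match, boundary face by boundary face, the flat graphs with curvature-jet vertices. Proposition \ref{prop:two-pushforwards} is stated only for flat kernels. I would first try to show that, in Kähler normal holomorphic coordinates, the subleading terms of $\partial\bdel^*e^{-tH}$ are $\bdel$-exact modulo terms of positive $t$-order that vanish on $\rho=0$ faces. The remaining curvature contributions would then be absorbed by the $\mathcal R$-insertions.
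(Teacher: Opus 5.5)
\paragraph{There is a genuine gap.} The paper only announces this theorem; the proof is deferred to \cite{li-wang}. Its construction (the tautological background $B$, $e^{\cR_B}$, the representative $\Nor$ and $\Ev_{\nabla,\mathcal R}$) does point to the architecture you propose. Your third step, describing the $\bdel$-part of the Grothendieck connection through $\nabla^{r-2}\mathcal R$, is consistent with $\Ev_{\nabla,\mathcal R}$. The gap starts in your first step.

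\paragraph{First step.} You are right that $e^{\cR_B}$ commutes with $\partial_P$, $\Delta_L$ and the $\cO_H$: its coefficients do not involve $A_0$, and $\partial_{A_0}$ passes unchanged through $A\mapsto B+[e^{-B/\zeta}A]_{\geq0}$. What transporting Theorem \ref{thm:flat} then proves is the QME for $W(P,e^{\cR_B}I)$ with the conjugated differential $e^{\cR_B}Q_0e^{-\cR_B}$. Neither of these is what you write.
\begin{itemize}
\item \emph{The differential.} To first order in $B$, a direct computation gives, up to an overall sign,
$[Q_0,\cR_B]A(\zeta)=BDA_0-\sum_{k\geq1}\zeta^kL_BA_k$, with $L_B=[D,B\,\cdot\,]$. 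So the conjugated differential carries the Lie derivative on the descendant slots and $BD$ on the primary slot. It is not $Q_0+\dd_{CE}$.
\item \emph{The interaction.} $e^{\cR_B}I\neq\widehat I_B$. The genus-zero string equation gives $\cR_BI_0=\tfrac12\Tr(BA_0^2)$: everything cancels except the three-point term. Hence $e^{\cR_B}I$ contains the coupling $e^{\cR_B}I_0-I_0$, which $\widehat I_B$ omits. This coupling has background inputs with no derivatives, so it lies outside the domain of $\Ev_{\nabla,\mathcal R}$ and cannot simply be kept.
\end{itemize}
So the claim that $I_0$ ``absorbs'' the $B$-coupling has it backwards. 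Your equivariant QME for $\widehat I_B$ with differential $Q_0+\dd_{CE}$ does not follow from the transport.

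\paragraph{Last step.} The omitted coupling and the $\mathfrak g$-action on the fields are what, on $X$, the global operator $Q$ and the curved kernel $P_0^L$ must supply. So the whole content of the theorem beyond the flat case is an exact local identity on $X$. Namely, $Q\cJ_X$ is cancelled by the short-time boundary operators of curved heat-kernel graphs with vertices $\cJ_X$. Equivalently, those operators equal the flat $\cO_H$ with covariant derivatives, plus curvature terms in precisely the combination produced by $e^{\cR_B}$, $\Nor$ and $\Ev_{\nabla,\mathcal R}$. The results of \cite{wangyan,renorm} give existence and locality of these boundary terms, not this identification.

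You name this identification as the main obstacle but do not supply it, and the suggested reduction does not settle it. Terms of positive $t$-order relative to the flat kernel need not vanish on $\rho=0$ faces. Subleading terms of the curved kernel carry curvature jets together with extra factors of $z-w$, $\bar z-\bar w$ or $t$. These can pair with vertices having fewer holomorphic derivatives and give finite, nonzero local terms, because survival at $\rho=0$ is governed by the total scaling of the graph form, not by order relative to the leading kernel. These are exactly the terms the curvature insertions in $\cJ_X$ must cancel. They have to be computed and matched, not shown to vanish or be $\bdel$-exact.

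\paragraph{A smaller point.} Background inputs do not carry zero momentum. In $\Nor$ each carries at least two holomorphic derivatives, and those are what $\Ev_{\nabla,\mathcal R}$ turns into $\nabla^{r-2}\mathcal R$.
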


\begin{proposition}[Metric independence]\label{prop:metric-independence}
Fix the complex structure of $X$ and the volume form $\Omega$.
For $a\in H^\bullet(\PV(X),\bdel)[[\zeta]][2]$, let $a_g$ denote
its $g$-harmonic representative. The limit
\begin{equation}\label{eq:metric-independent-F}
 I_X[\infty](a)=\lim_{L\to\infty}I_X[L](a_g)
\end{equation}
exists coefficientwise and is independent of the choice of K\"ahler metric
$g$ satisfying \eqref{eq:volume-metric}. This includes the vacuum term $I_X[\infty](0)$.
\end{proposition}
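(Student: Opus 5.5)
The plan has three parts. First, take the infrared limit for a fixed metric by Hodge theory. Second, compare two admissible metrics by a homotopy argument in Costello's effective BV formalism. Third, restrict to harmonic inputs, where the homotopy terms drop out.

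\textbf{Existence of the limit.} Fix $g$ satisfying \eqref{eq:volume-metric}. As $L\to\infty$ the heat kernel $e^{-tH}$ converges exponentially fast, in every $C^k$ norm, to the projection onto harmonic polyvectors. Hence the propagator $P_{0,L}$ converges to $P_{0,\infty}=\int_0^\infty \partial\bdel^*e^{-tH}\,\dd t$ restricted to the orthogonal complement of the harmonic space, with $t$-integrals convergent at infinity. For each fixed genus, arity and descendant degree, $I_X[L]$ is a finite sum of graph integrals, and the short-time behaviour is uniform in $L$ by Theorem \ref{thm:curved}. The limit then exists coefficientwise. This holds for the vacuum graphs too, since they involve only closed trivalent-type integrals of the same kernels.

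\textbf{Comparing two metrics.} Let $g_0,g_1$ both satisfy \eqref{eq:volume-metric}. Since $\Omega$ and the class fix $c$ up to the volume normalization, I will join them by a smooth path $g_s$ of such metrics: the Ricci-flat metrics with fixed complex structure are parametrized by the Kähler cone, which is convex. Along the path, I will construct an $s$-family of effective interactions $I_{X,s}[L]$ together with a homotopy in the sense of \cite{costello}. That is, I will find a $\dd s$-component $\dot I_s[L]$ such that
\[
I_{X,s}[L]+\dd s\,\dot I_s[L]
\]
satisfies the QME over $\Omega^\bullet([0,1])$, with differential $Q+\dd_s$. To do this, I apply Theorem \ref{thm:curved}, or more precisely its proof, to the family version. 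The data entering $\cJ_X$ are the Chern connection $\nabla$, the curvature $\mathcal R$, the BV kernel, the propagator $\partial\bdel^*_{g_s}e^{-tH_s}$ and the normalization $\Nor$. All of these vary smoothly in $s$, and the trace and $Q$ are $s$-independent. The variation of $\nabla$ is governed by $\mathfrak g$-equivariance: the CE background construction \eqref{eq:lift} is built so that changing the formal volume coordinates changes $\Ev_{\nabla,\mathcal R}$ by a $Q$-exact local term plus BV brackets. The $\dd s$-component is then the sum of this local transgression and the standard term $\partial_{\dot P}$ coming from $\frac{d}{ds}P_s$. I expect this step to be the main obstacle: one must check that the CE-cochain transgression for the $s$-dependent connection solves the family QME at the local level, including positive genus. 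I would first try to realize $\nabla_s$ as a flat family of formal exponential maps, so that the local identity reduces to the $\mathfrak g$-cocycle property already used in Theorem \ref{thm:curved}.

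\textbf{Harmonic inputs.} Given the homotopy, let $L\to\infty$. The infrared BV Laplacian becomes the one of the finite-dimensional harmonic space $\mathbb H_s\cong H^\bullet(\PV(X),\bdel)[[\zeta]]$. The family QME then says that $I_{X,s}[\infty]$, transported along the identification $\mathbb H_s\cong\mathbb H_0$ given by harmonic projection, changes by
\[
Q_{\mathbb H}\text{-exact terms}+\{\,\cdot\,,\dot I_s[\infty]\}+\h\Delta\dot I_s[\infty].
\]
The cohomology space carries zero differential. I then need a degree or weight argument to show that, on harmonic representatives, these terms contribute nothing after evaluation on $a$. I would use the grading $|\zeta|=2$, $|\h|=6-2N$ and the fact that $D$ and $\bdel$ vanish on harmonic forms. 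If that fails, the fallback is to show that the BV Laplacian on $\mathbb H$ annihilates the relevant cohomology classes, using the $\zeta$-filtration, and that the bracket terms correspond to the change of harmonic representative $a_{g_s}$ only by $\bdel$-exact pieces. Those pieces pair to zero in the infrared. The vacuum coefficient is handled in the same way, since $\frac{d}{ds}I_{X,s}[\infty](0)=\h\Delta\dot I_s[\infty](0)$, and this vanishes because the relevant trace pairs the zero-input harmonic term against itself with no free legs.
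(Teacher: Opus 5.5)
The paper states this proposition without proof (details are deferred to \cite{li-wang}), so I am judging your argument on its own. Parts one and two are a reasonable outline, although part two is only an expectation. The genuine gap is in part three, and it starts from a wrong description of the infrared limit.

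\textbf{The infrared kernel is zero.} The BV Laplacian does not become a nontrivial Laplacian on the harmonic space; it becomes zero. The BV kernel is $\partial_\Omega$ applied to the heat kernel, so it tends to $\partial_\Omega$ applied to the harmonic projector $\pi_s$. This vanishes because $g_s$-harmonic polyvectors are $\partial_\Omega$-closed when $g_s$ satisfies \eqref{eq:volume-metric}.

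The same holds for the $\dd s$-component of the family kernel. The family Laplacian is $[Q+\dd_s,\bdel^*_{g_s}]=H_s+\dd s\,\partial_s\bdel^*_{g_s}$. By Duhamel, its heat kernel has a $\dd s$-part which tends, up to sign, to $(\partial_s\pi_s)\bdel^*_{g_s}G_s+G_s\bdel^*_{g_s}(\partial_s\pi_s)$, where $G_s$ is the Green operator. This is also killed by $\partial_\Omega$, for three reasons:
\begin{itemize}
\item $\partial_\Omega\pi_s=0$ for all $s$;
\item $\partial_\Omega$ anticommutes with $\bdel^*_{g_s}$;
\item $\partial_\Omega$ commutes with $G_s$.
\end{itemize}
This is exactly where the path must stay among metrics satisfying \eqref{eq:volume-metric}.

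So the family QME at $L=\infty$ has no bracket or Laplacian terms. It reads $(Q+\dd_s)(I_{X,s}[\infty]+\dd s\,\dot I_s[\infty])=0$. In particular $QI_{X,s}[\infty]=0$ on all fields, and $\partial_sI_{X,s}[\infty]=\pm Q\dot I_s[\infty]$.

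The degree or weight argument you propose cannot replace this. The QME is homogeneous, so the bracket and Laplacian terms have exactly the degree of $\partial_sI_{X,s}[\infty]$. Your reason for the vacuum term (a harmonic term paired with itself) is not an argument either. The vacuum term is constant simply because $Q\dot I_s[\infty]$ has no zero-input part.

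\textbf{The variation of the representative.} It is not enough that $a_{g_s}$ changes by $\bdel$-exact pieces. In the descendant complex with $Q=\bdel-\zeta\partial_\Omega$, a $\bdel$-exact element need not be $Q$-exact. What is true is that $\partial_sa_{g_s}$ is $\bdel$-exact and also $\partial_\Omega$-closed, since every $a_{g_s}$ is $\partial_\Omega$-closed. The $\partial\bdel$-lemma then gives $\partial_sa_{g_s}=\bdel\partial_\Omega c_s=Q(\partial_\Omega c_s)$.

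Now $\frac{d}{ds}I_{X,s}[\infty](a_{g_s})$ splits into two pieces, and both vanish:
\begin{itemize}
\item $\pm(Q\dot I_s[\infty])(a_{g_s})$ is zero because $Qa_{g_s}=0$;
\item the derivative of $I_{X,s}[\infty]$ at the $Q$-closed point $a_{g_s}$ in the $Q$-exact direction $Q(\partial_\Omega c_s)$ is zero because $QI_{X,s}[\infty]=0$.
\end{itemize}

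\textbf{Part two.} Costello's general homotopy-extension along the path of gauge fixings would connect $\cJ_X$ for $g_0$ only to \emph{some} quantization at $g_1$. It need not reach the explicit interaction \eqref{eq:curved-interaction} for $g_1$. So the $\dd s$-component really has to come from the explicit construction, e.g.\ the relative CE argument run over $X\times[0,1]$, and that is what you leave unverified.
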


\section{The primary sector for \texorpdfstring{$\dim \leq3$}{1 <= N <= 3}}\label{sec:primary}

Fix $(X,\Omega,g)$ as in Section \ref{sec:curved}. Let
$\cE_{\prim}=\PV(X)[2]$ be embedded at descendant degree zero, and put
\begin{equation}\label{eq:cubic}
 I_{\mathrm{cub}}(A_0)=\frac16\Tr(A_0^3).
\end{equation}

\begin{lemma}[Primary vertex reduction]\label{lem:primary-vertices}
For $N=1,2,3$, write
$\cJ_X=I_0+\sum_{g\geq1}\h^g\cJ_{X,g}$. Then
\begin{equation}\label{eq:primary-vertices}
 \cJ_{X,g}|_{\cE_{\prim}}=0\quad(g\geq1),\qquad
 \cJ_X|_{\cE_{\prim}}=I_{\mathrm{cub}}.
\end{equation}
This holds on all primary fields, without imposing $DA_0=0$.
\end{lemma}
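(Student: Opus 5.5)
The plan is to reduce the statement to a vanishing property of the coefficients $V_{g;\bm k}$, using the degree constraint in \eqref{eq:vertices} together with the rank identity of Lemma \ref{lem:betti}. The second equality in \eqref{eq:primary-vertices} then follows from the first and Lemma \ref{lem:classical}. That lemma gives $I_0|_{\cE_{\prim}}=\Tr(A_0^3)/6=I_{\mathrm{cub}}$, since only $\bm k=0$, $n=3$ survives at genus zero. So the work is to show $\Ev_{\nabla,\mathcal R}\Nor(e^{\cR_B}I_+)$ vanishes on $\cE_{\prim}$.

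\textbf{Step 1 (only $\bm k=0$ terms survive).} I would first compute $e^{\cR_B}$ on primary fields. By \eqref{eq:RB}, $e^{\cR_B}A_k$ is the $\zeta^k$ coefficient of $B+[e^{-B/\zeta}A(\zeta)]_{\geq0}$. This equals $\delta_{k0}B+\sum_{j\geq0}\frac{(-B)^j}{j!}A_{k+j}$. On $\cE_{\prim}$ we have $A_m=0$ for $m\geq1$, so:
\begin{itemize}
\item $e^{\cR_B}A_0$ restricts to $B+A_0$;
\item $e^{\cR_B}A_k$ restricts to $0$ for $k\geq1$.
\end{itemize}
Hence in $e^{\cR_B}I_+|_{\cE_{\prim}}$ only the terms of \eqref{eq:flat-interaction} with $\bm k=0$ contribute. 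Each input slot is filled by $A_0$ or $B$, and holomorphic derivatives are applied through the same symbol $V_{g;0}(\partial_{z_1},\ldots,\partial_{z_n})$. The operations $\Nor$ and $\Ev_{\nabla,\mathcal R}$ are linear in these coefficients. They only change the polynomial representative by total derivatives and then substitute covariant derivatives and curvature jets. So it suffices to prove that $V_{g;0}\equiv0$ as a polynomial on conserved sources for all $g\geq1$ and $N\leq3$.

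\textbf{Step 2 (degree count and rank).} For $\bm k=0$, the degree condition $|\bm k|=(3-N)g+n-3$ forces $n=3-(3-N)g$. The conserved sources $\mathbf p_1,\ldots,\mathbf p_n\in\R^N$ satisfy $\sum_i\mathbf p_i=0$, so they span a space of dimension $\rho\leq\min(n-1,N)$. For $g\geq1$ we have
\[
n-1=2-(3-N)g\leq 2-(3-N)=N-1,
\]
so $\rho\leq N-1<N$. The rank identity of Lemma \ref{lem:betti} then gives $\Theta_g(P)^{\rho g+1}=0$ pointwise on $\cM_{g,n}$. Since $Ng\geq(\rho+1)g\geq\rho g+1$, the integrand $\Theta_g^{Ng}$ in \eqref{eq:vertices} vanishes identically, and so $V_{g;0}(P)=0$ for every conserved $P$.

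The degenerate cases fit this pattern. For $N=1$ we get $g=1$, $n=1$, hence $P=0$ and $\Theta_1=0$. For $N=2$ we get either $(g,n)=(1,2)$ with $\rho\leq1$, or $(2,1)$ and $(3,0)$ with $\rho=0$; the $n=0$ term is in any case excluded from the lift. For $N=3$ we get $n=3$ for every $g$, with $\rho\leq2$. When $n<0$ there is no term at all.

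\textbf{Step 3 and the main obstacle.} Because $V_{g;0}$ vanishes as a function on the conserved hyperplane, its polynomial representative before substitution may be taken to be $0$. Any other representative differs by total holomorphic derivatives, and these integrate to zero, including after $\Ev_{\nabla,\mathcal R}$, since $\nabla$ preserves $\Omega$ and the trace. I expect this compatibility to be the main point needing care: the normal-ordered representative $\Nor$ and the curvature evaluation must not produce nonzero terms from a symbol that vanishes modulo momentum conservation. I would handle it by noting that $\Nor$ is defined by eliminating an output momentum, which is a linear operation on functions on the conserved hyperplane. Therefore $\Nor(0)=0$ at the level of symbols, and $\Ev_{\nabla,\mathcal R}$ is then applied to the zero symbol. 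This gives $\cJ_{X,g}|_{\cE_{\prim}}=0$ for $g\geq1$, with no use of $DA_0=0$.
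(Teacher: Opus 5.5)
Your proposal is correct and follows essentially the same route as the paper. On primary fields $e^{\cR_B}$ reduces to $A_0\mapsto A_0+B$, so only $\bm k=0$ vertices survive. The degree count $n=3-(3-N)g\le N$ then combines with the rank identity of Lemma \ref{lem:betti}: conserved sources span at most $N-1$ dimensions, so $\Theta_g^{Ng}=0$ for $g\geq1$, and Lemma \ref{lem:classical} supplies the cubic term at genus zero. Your explicit treatment of the degenerate cases, and of why $\Nor$ and $\Ev_{\nabla,\mathcal R}$ preserve the vanishing, just spells out what the paper compresses into ``curvature evaluation preserves these conditions.''
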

\begin{proof}
On primary fields, \eqref{eq:RB} reduces to $A_0\mapsto A_0+B$.
For $n$ ordinary and $m$ background inputs, \eqref{eq:vertices} gives
\[
 n+m=3-(3-N)g=N-(3-N)(g-1)\leq N\qquad(g\geq1),
\]
contradicting the rank requirement $n+m\geq N+1$ from
Lemma \ref{lem:betti}.
Curvature evaluation preserves these conditions. In genus zero,
\eqref{eq:classical} with all descendants zero forces $n=3$ and gives
the coefficient $1/6$.
\end{proof}

\begin{theorem}\label{thm:primary}
For $N=1,2,3$, only classical cubic vertices contribute to connected
Feynman graphs with primary external inputs, at every loop order and
including vacuum graphs.
\end{theorem}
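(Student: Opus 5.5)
The plan is to reduce Theorem~\ref{thm:primary} to Lemma~\ref{lem:primary-vertices} by a degree count on Feynman graphs. The point is that descendant indices cannot be created by the propagator, so no vertex in a graph with primary external inputs ever sees a descendant input.

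First, recall that the effective interaction $I_X[L]=W(P_{0,L},\cJ_X)$ is a sum over connected graphs. In each graph the vertices are components of $\cJ_X$, every internal edge carries $P_{\epsilon,L}$, and the external legs carry the field. I would check that the propagator only connects primary slots. In the flat case this is built into the convention that $\partial_J$ contracts one unordered pair of primary slots; on $X$ the same holds, since $P_0^L=\int_0^L\partial\bar\partial^*e^{-t[\bar\partial,\bar\partial^*]}dt$ is a kernel in $\PV(X)\otimes\PV(X)$ with no $\zeta$-dependence. So every half-edge attached to a vertex is either an external primary input or an internal primary slot, and every slot of every vertex in such a graph is a primary slot ($k_i=0$).

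Second, such a vertex is therefore evaluated only through the restriction $\cJ_X|_{\cE_{\prim}}$, regarded as a multilinear functional and polarized in each slot. Lemma~\ref{lem:primary-vertices} says this restriction equals $I_{\mathrm{cub}}$, and it holds on all primary fields without imposing $DA_0=0$, so it applies even though internal edges carry non-closed kernels.

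The step that needs care is making sure that the vanishing of $\cJ_{X,g}|_{\cE_{\prim}}$ holds at the level of the full multilinear vertex coefficients, not just on the diagonal $A_0=\cdots=A_0$. Otherwise one could worry that inserting distinct kernels on different slots reactivates a term. I would argue this from the proof of Lemma~\ref{lem:primary-vertices}: the obstruction is a degree statement. With all $k_i=0$, the condition $|\bm k|=(3-N)g+n+m-3$ forces $n+m=N-(3-N)(g-1)\le N$. On the other hand, $\Theta_g^{Ng}$ vanishes when the sources span dimension at most $N-1$, which holds whenever $n+m\le N$ because $\sum\mathbf p_i=0$. This vanishing is a statement about each coefficient $V_{g;\bm 0}$, hence about the polarized vertex.

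The background insertions from $e^{\cR_B}$, together with $\Nor$ and $\Ev_{\nabla,\mathcal R}$, do not change the count. $\cR_B$ sends a primary $A_0$ only to $B$ and introduces no new primary slots from descendants, since $A_{k+1}$ slots are absent. Curvature evaluation only replaces jets. Hence the count applies to every term.

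Finally, the genus-zero vertices restricted to primary slots are just $\Tr(A_0^3)/6$, again by Lemma~\ref{lem:classical} with $\bm k=0$, which forces $n=3$. Vacuum graphs are covered by the same argument, since they have only internal primary slots. So every contributing graph is built from classical cubic vertices, at every loop order.
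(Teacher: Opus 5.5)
Your proof is correct and takes essentially the paper's route. The paper leaves this theorem without a written proof, but it follows from Lemma~\ref{lem:primary-vertices}. Since the propagator and BV kernels contract only descendant-degree-zero slots, every vertex in a graph with primary external inputs is evaluated on $\cE_{\prim}$, where it reduces to $I_{\mathrm{cub}}$. The lemma's remark that $DA_0=0$ is not imposed is exactly what covers the internal-edge insertions, as you observed. Your re-derivation of the degree/rank count repeats the lemma's own proof. One small imprecision is harmless: $\cR_B A_0=B-BA_1$ rather than just $B$, but the $BA_1$ term has a descendant slot and drops on primary inputs, as you note.
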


\phantomsection

\addcontentsline{toc}{section}{References}
{\small
\begin{thebibliography}{99}
\setlength{\itemsep}{3pt}
\setlength{\parskip}{0pt}

\bibitem{arbarello-cornalba}
E.~Arbarello and M.~Cornalba,
\emph{Combinatorial and algebro-geometric cohomology classes on the moduli
spaces of curves}, J.~Algebraic Geom.~\textbf{5} (1996), 705--749.
\href{https://arxiv.org/abs/alg-geom/9406008}{arXiv:alg-geom/9406008}.

\bibitem{bcov}
M.~Bershadsky, S.~Cecotti, H.~Ooguri and C.~Vafa,
\emph{Kodaira--Spencer theory of gravity and exact results for quantum
string amplitudes}, Comm.~Math.~Phys.~\textbf{165} (1994), 311--428.
\href{https://arxiv.org/abs/hep-th/9309140}{arXiv:hep-th/9309140}.

\bibitem{costello}
K.~Costello, \emph{Renormalization and Effective Field Theory},
Mathematical Surveys and Monographs, vol.~170,
American Mathematical Society, Providence, RI, 2011.
\href{https://doi.org/10.1090/surv/170}{doi:10.1090/surv/170}.

\bibitem{costello-li}
K.~Costello and S.~Li,
\emph{Quantum BCOV theory on Calabi--Yau manifolds and the higher genus
B-model}, \href{https://arxiv.org/abs/1201.4501}{arXiv:1201.4501} (2012).

\bibitem{costello-li-open-closed}
K.~Costello and S.~Li,
\emph{Quantization of open-closed BCOV theory, I}, \href{https://arxiv.org/abs/1505.06703}{arXiv:1505.06703 } (2015).

\bibitem{costello-li-GS}
K.~Costello and S.~Li,
\emph{Anomaly cancellation in the topological string}, Adv. Theor. Math. Phys. Vol. 24, No. 7 (2020), pp. 1723-1771. \href{https://dx.doi.org/10.4310/ATMP.2020.v24.n7.a2}{doi:10.4310/ATMP.2020.v24.n7.a2}.

\bibitem{deturck-kazdan}
D.~M.~DeTurck and J.~L.~Kazdan,
\emph{Some regularity theorems in Riemannian geometry},
Ann.~Sci.~\'{E}cole Norm.~Sup.~(4) \textbf{14} (1981), no.~3, 249--260.
\href{https://doi.org/10.24033/asens.1405}{doi:10.24033/asens.1405}.

\bibitem{fay}
J.~D.~Fay, \emph{Theta Functions on Riemann Surfaces},
Lecture Notes in Mathematics, vol.~352, Springer, Berlin, 1973.
\href{https://doi.org/10.1007/BFb0060090}{doi:10.1007/BFb0060090}.

\bibitem{fuks}
D.~B.~Fuks, \emph{Cohomology of Infinite-Dimensional Lie Algebras},
Monographs in Contemporary Mathematics, Consultants Bureau, New York, 1986.
\href{https://doi.org/10.1007/978-1-4684-8765-7}{doi:10.1007/978-1-4684-8765-7}.

\bibitem{grushevsky-krichever-norton}
S.~Grushevsky, I.~Krichever and C.~Norton,
\emph{Real-normalized differentials: limits on stable curves}, Russian Math. Surveys 74 (2019), no. 2, 265–324. \href{https://arxiv.org/abs/1703.07806v2}{arXiv:1703.07806}, v2 (2019). 

\bibitem{knudsen}
F.~F.~Knudsen, \emph{The projectivity of the moduli space of stable
curves, II: The stacks $M_{g,n}$}, Math.~Scand.~\textbf{52} (1983), 161--199.
\href{https://doi.org/10.7146/math.scand.a-12001}{doi:10.7146/math.scand.a-12001}.

\bibitem{li-renorm}
S.~Li, \emph{Renormalization method and mirror symmetry},
SIGMA \textbf{8} (2012), 101, 17~pp.
\href{https://doi.org/10.3842/SIGMA.2012.101}{doi:\allowbreak10.3842/SIGMA.2012.101}.

\bibitem{li-VOA}
S.~Li, \emph{Vertex algebras and quantum master equation}, J.Diff.Geom. \textbf{123} (2023), no 3, 461--521.
\href{https://doi.org/10.4310/jdg/1683307007}{doi.org/10.4310/jdg/1683307007}

\bibitem{li-index}
S.~Li, \emph{Quantization and Algebraic Index}, Surv. Differ. Geom. \textbf{28} (2023), no 1.
\href{https://dx.doi.org/10.4310/SDG.260101000010}{doi.org/10.4310/SDG.260101000010}

\bibitem{li-wang}
S.~Li and M.~Wang, \emph{Quantum BCOV invariants on Calabi-Yau manifolds}, in preparation. 

\bibitem{wang}
M.~Wang, \emph{Feynman graph integrals on $\C^d$}, Commun. Math. Phys. 406.5 (2025): 116
\href{https://link.springer.com/article/10.1007/s00220-025-05298-2}{DOI: 10.1007/s00220-025-05298-2}.

\bibitem{wangyan}
M.~Wang and J.~Yan, \emph{Feynman graph integrals on K\"ahler manifolds},
\href{https://arxiv.org/abs/2507.09170v5}{arXiv:2507.09170}, v5.

\bibitem{renorm}
M.~Wang and J.~Yan, \emph{Renormalizations in holomorphic field theories on
K\"ahler manifolds},
\href{https://arxiv.org/abs/2608.00546v2}{arXiv:2608.00546}, v2.
\end{thebibliography}
}
\end{document}